\documentclass[final]{siamltex}

\usepackage{amsmath}
\usepackage{amssymb}
\usepackage{graphicx}
\usepackage{epstopdf}
\usepackage{algorithm}
\usepackage{url}
\usepackage{cite}
\usepackage{color}
\usepackage{algorithmic,color}
\usepackage[top=1in, bottom=1in, left=1.25in, right=1.25in]{geometry}

\DeclareMathOperator*{\argmax}{arg\,max}
\DeclareMathOperator*{\argmin}{arg\,min}

\title{Total Variation and Tight Frame Image Segmentation with Intensity Inhomogeneity}

\begin{document}
\author{
      Raymond Chan\thanks{
       Department of Mathematics, City University of Hong Kong,
       Tat Chee Avenue, Hong Kong. Research supported by HKRGC Grants No. CityU12500915, CityU14306316, HKRGC CRF Grant C1007-15G, and HKRGC AoE Grant AoE/M-05/12. Email: rchan.sci@cityu.edu.hk.
         }
       \and
       Hongfei Yang\thanks{
       Department of Mathematics, The Chinese University of Hong Kong,
       Shatin, Hong Kong. Email: hongfeiyang@cuhk.edu.hk.
       }
       \and
       Tieyong Zeng\thanks{
        Department of Mathematics, The Chinese University of Hong Kong,
        Shatin, Hong Kong. Research supported by National Science Foundation of China No. 11671002, CUHK start-up and CUHK DAG 4053296, 4053342.
        Email:zeng@math.cuhk.edu.hk.}
        }
\maketitle


\providecommand{\norm}[1]{\lVert#1\rVert}
\providecommand{\abs}[1]{\lvert#1\rvert}

\newcommand{\red}[1]{{\color[rgb]{1,0,0}{#1}}}
\newcommand{\blue}[1]{{\color[rgb]{0,0,1}{#1}}}

\makeatletter
\def\moverlay{\mathpalette\mov@rlay}
\def\mov@rlay#1#2{\leavevmode\vtop{%
   \baselineskip\z@skip \lineskiplimit-\maxdimen
   \ialign{\hfil$\m@th#1##$\hfil\cr#2\crcr}}}
\newcommand{\charfusion}[3][\mathord]{
    #1{\ifx#1\mathop\vphantom{#2}\fi
        \mathpalette\mov@rlay{#2\cr#3}
      }
    \ifx#1\mathop\expandafter\displaylimits\fi}
\makeatother

\newcommand{\cupdot}{\charfusion[\mathbin]{\cup}{\cdot}}
\newcommand{\bigcupdot}{\charfusion[\mathop]{\bigcup}{\cdot}}

\begin{abstract}
Image segmentation is an important task in the domain of computer vision and medical imaging. In natural and medical images, intensity inhomogeneity, i.e. the varying image intensity, occurs often and it poses considerable challenges for image segmentation. In this paper, we propose an efficient variational method for segmenting images with intensity inhomogeneity. The method is inspired by previous works on two-stage segmentation and variational Retinex. Our method consists of two stages. In the first stage, we decouple the image into reflection and illumination parts by solving a convex energy minimization model with either total variation or tight-frame regularisation. In the second stage, we segment the original image by thresholding on the reflection part, and the inhomogeneous intensity is estimated by the smoothly varying illumination part. We adopt a primal dual algorithm to solve the convex model in the first stage, and the convergence is guaranteed. Numerical experiments clearly show that our method is robust and efficient to segment both natural and medical images.
\end{abstract}

\begin{keywords}image segmentation, intensity inhomogeneity, primal-dual algorithm, Retinex, tight frame, total variation\end{keywords}
\begin{AMS}52A41, 65F22, 65K10, 65K15, 68U10 \end{AMS}

\pagestyle{myheadings}
\thispagestyle{plain}
\markboth{}{IMAGE SEGMENTATION WITH INTENSITY INHOMOGENEITY}

\section{Introduction}
The main goal of image segmentation is to partition the underlying image into different nearly homogeneous segments. However, intensity inhomogeneity, which occurs often in natural and medical images, \cite{BAR1998,PDL1998,RJC1998,Li_MRI,hybridclustering,VU2006} will create significant challenges for image segmentation. Here, by \emph{intensity inhomogeneity} (or {\it biased field}), we refer to the spurious smoothly varying image intensities \cite{VU2006,AMN2002,KSG,ZYC2013}.  Many efficient segmentation methods such as \cite{CV,YBTBmul,YBTBtwo,DCS,LNZS,CXiaoCRaymZTiey1} assume that different phases of an image can be well approximated by constant functions. Therefore, if an object in the image has varying intensities, or the intensities of different objects have overlaps, these methods may fail to give good segmentation results. Fig.~\ref{fishdec} is a good example to illustrate the concept of intensity inhomogeneity---the background of this image is not uniform. Fig.~\ref{fishdec}(b) is an estimation of the illumination part of the image given by our method. From it, we see clearly that the bottom part of the image is  darker, while the top part of the image is lighter. Naturally, the illumination part of an image can provide environment information for image analysis, but the spatially varying light in the background, such as that in Fig.~\ref{fishdec}(a), makes the segmentation challenging, see our later experiment in Fig.~\ref{Fish}.

\begin{figure}[htbp]
\begin{center}
\begin{minipage}[t]{4.5 cm}
\includegraphics[height=3.35 cm]{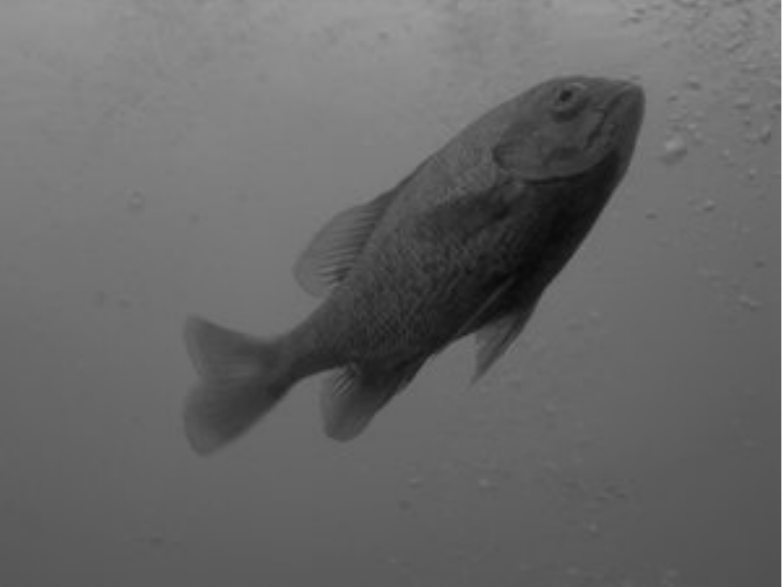}
\centering{(a) Original image}
\end{minipage}
\begin{minipage}[t]{4.5 cm}
\includegraphics[height=3.35 cm]{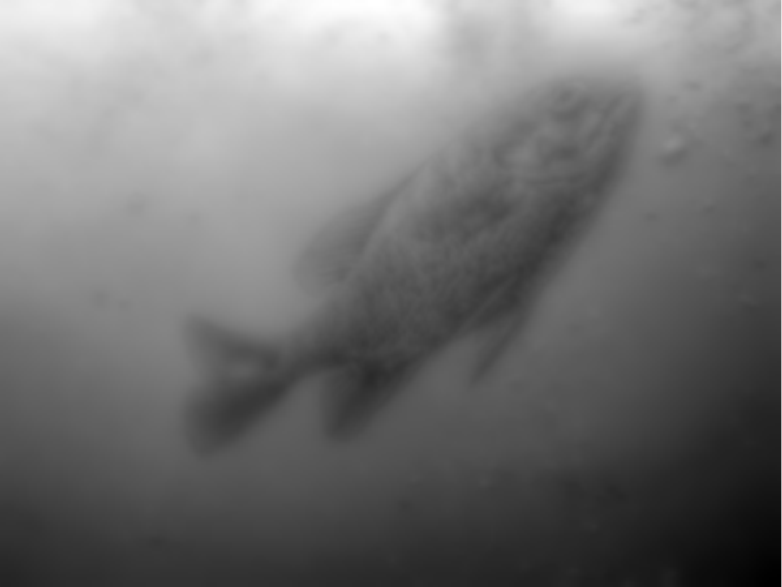}
\centering{(b) The illumination}
\end{minipage}
\end{center}
\caption{Example of intensity inhomogeneity. (a) Original image ``Fish'', (b) The estimated environment illumination from (a) by our method.}
\label{fishdec}
\end{figure}

In the literature, several methods have been proposed for image segmentation with intensity inhomogeneity. In \cite{BAR1998,JSMA1996}, the estimation of intensity inhomogeneity is through certain low pass filter. In \cite{ultrasound_seg, LiX, RJC1998,KSG}, the authors proposed to compute intensity inhomogeneity by modified finite Gaussian mixture models. In \cite{AMN2002,PDL1998}, the authors modified the fuzzy C-means algorithm to deal with intensity inhomogeneity. For the model in \cite{AMN2002}, the labeling of a pixel is decided by fuzzy C-means algorithm and the labeling of its immediate neighborhood. For the model in \cite{PDL1998}, a varying field is multiplied to the centroids of the clusters, and a regularisation of the varying field is added. In \cite{ZYC2013}, a multi-phase segmentation method is proposed based on the MAP principle and local information of the joint density. In their model, intensity inhomogeneity at each pixel is estimated from the neighborhood intensities, and a weight function based on the distance to the pixel is applied in the estimation procedure. In \cite{hybridclustering}, the authors combined hard, fuzzy and probabilistic criteria into a hybrid clustering algorithm, and smoothing filters are also incorporated to reduce noise and to ensure the smoothness of the bias field. In \cite{Li_MRI}, the authors used the localized K-means clustering method and the level set approach to segment images with intensity inhomogeneity. In their model, intensity inhomogeneity at small neighborhood is assumed to be constant and the local different phases can be clustered by K-means method. In \cite{Kaihua2016level}, the authors modeled inhomogeneous objects as Gaussian distributions of different means and variances. They used a sliding window to transform the original image domain into another domain, where the different Gaussian distributions can be better separated. A level set method with regularisation is then implemented to segment the transformed image. For a comprehensive review on segmentation with intensity inhomogeneity, please refer to \cite{VU2006}. We note that most of the methods mentioned above are designed, or extensively tested, on medical images. We stress that our proposed method performs well on both medical images and natural images.

In order to get better segmentation results for images with intensity inhomogeneity, one natural idea is to first remove the non-uniform field from the underlying image, and then segment on the remained image. In this paper, we are particularly interested in the Retinex theory \cite{TVRetinex,morel-TIP2010,MaOsher2011} which turns out to be an extremely important tool to remove intensity inhomogeneity. The term \emph{Retinex}, first coined in \cite{LEH1971}, is combined from the words \emph{retina} and \emph{cortex}. The Retinex theory explains how human eyes perceive constant colors under various illuminations.  In \cite{TVRetinex}, Ng and Wang proposed a novel variational Retinex approach to improve qualities of images with intensity inhomogeneity. In their model, the observed image $S$ is determined by illumination $L$ and reflection $R$ of the underlying objects in the following way
\begin{equation}\label{slr}
S= L\cdot R,
\end{equation}
where physically $0< R\leq 1$.
Further assuming that the illumination $L$ is spatially smooth and the reflection $R$ is piecewise constant, in order to extract $L$ and $R$ from $S$, they proposed to consider the following minimization problem
\begin{equation} \label{TV_Retinex}
    \inf_{r\geq 0,l} \left\{ \int_\Omega \abs{Dr} + \frac{\beta}{2}\int_\Omega \abs{\nabla l}^2dx + \frac{\gamma}{2}\int_\Omega (l-r-s)^2dx + \frac{\mu}{2} \int_\Omega l^2dx \right\},
\end{equation} where $s=\log(S)$, $r=-\log(R)$ and $l=\log(L)$. Note that here $Dr$ should be understood in the distributional sense, and the parameters $\beta$, $\gamma$ and $\mu$ are positive.
As explained in \cite{TVRetinex}, the last term here is to ensure the well-posedness of the model, and typically $\mu$ can be taken to be very small. After \eqref{TV_Retinex} is solved, one can perform a Gamma correction on $L$ to get $L'$ and set $S' = L'\cdot R$. Numerical experiments in \cite{TVRetinex} show that the modified image $S'$ has better quality with less intensity inhomogeneity.

For image segmentation, one of the most prominent approaches was given in \cite{MS-pre,MS} by Mumford and Shah. In these seminal works, they proposed to segment the image $S$ by calculating an optimal approximation $u$ of $S$ and a decomposition of the image domain
\begin{equation*}
	\Omega = \Omega_1 \cup \Omega_2 \cup \cdots \cup \Omega_n \cup \Gamma,
\end{equation*}
where $\Omega_i$'s are connected open disjoint subsets of $\Omega$, and $\Gamma$ is the collection of the boundaries of $\Omega_i$'s in $\Omega$. As $u$ is required to be continuous and to approximate $S$ in $\Omega_i$, the objective functional to be minimised in \cite{MS-pre,MS} is then given by
\begin{equation}
	\label{MSmodel}
	E(u,\Gamma):=\text{Length}(\Gamma)+\frac{\beta}{2}\int_{ \Omega \backslash \Gamma} \abs{\nabla u}^2 dx+\frac{\lambda}{2}\int_\Omega (S-u)^2 dx  ,
\end{equation} where $\lambda$ and $\beta$ are positive parameters and the length of $\Gamma$ can be written as $\mathcal{H}^{1}(\Gamma)$, the $1$-dimensional Hausdorff measure in $\mathbb{R}^2$, see \cite{TonyMumShah}.

Due to the non-convexity of the term $\text{Length}(\Gamma)$, the minimization of the energy \eqref{MSmodel} is extremely challenging, see \cite{C,CA,AT,AT2,TonyMumShah,David2005,CXiaoCRaymZTiey1,PCBC,PCCB,DMN2008} and reference therein for some previous effective efforts. In \cite{CXiaoCRaymZTiey1}, the authors proposed a novel two-stage segmentation method, which is closely related to the original Mumford-Shah model. Indeed, in the first stage of their approach, they proposed to solve the following convex minimization problem
\begin{equation}
\label{CaiModel}
	\inf_{u\in W^{1,2}(\Omega)}{E(u)}:= \int_\Omega \abs{\nabla u}dx+\frac{\beta}{2}\int_\Omega \abs{\nabla u}^2 dx +\frac{\lambda}{2}\int_\Omega (S-Au)^2dx,
\end{equation} where $A$ is a blurring kernel if the given image is blurred by $A$.
Here, similarly to \cite{CEN,LNZS}, the first term is to control the length of edges in the solution image,
the second term is for smoothing the image to erase tiny structures\cite{CXiaoCRaymZTiey1,Chanyangzeng}, and the third term is the classical data-fidelity term. Note that the model \eqref{CaiModel} closely connects three major tasks in image processing: denoising, deblurring and segmentation
and it has been utilized in \cite{Muller} for image restoration.

After solving \eqref{CaiModel}, the second stage in \cite{CXiaoCRaymZTiey1} is to segment the original image $S$ by thresholding on the solution image $u$. This two-stage approach has several advantages: first, the minimization problem \eqref{CaiModel} is convex and there exists fast numerical scheme to solve the minimization problem; second, this two-stage approach can solve multi-phase segmentation efficiently; third, the thresholding in the second stage is independent of solving the first stage, and users can try different thresholds and/or number of phases without recalculating \eqref{CaiModel}. Despite the superior numerical performance of this two-stage approach \cite{CXiaoCRaymZTiey1,Chanyangzeng}, it is still an open question to understand the mathematical connection and difference between (\ref{CaiModel}) and the Mumford-Shah model \eqref{MSmodel}. Indeed, Cai and Steidl showed that their two-class segmentation model by the so called Iterated ROF Thresholding procedure is equivalent to the Chan-Vese model with some adapted regularisation parameter, see \cite{Cai-steidl2013} for more details. It is well-known that the Chan-Vese model is a simplified version of \eqref{MSmodel} (taking $\mu=+\infty$) and the ROF model is a particular case of \eqref{CaiModel}. Moreover, in \cite{Chanyangzeng}, the authors illustrated that for a class of simple images, theoretically (\ref{CaiModel}) yields the same solutions as the Mumford-Shah model.

The current paper explores another important aspect, i.e., the above-mentioned intensity inhomogeneity issue which affects significantly image segmentation results. Indeed, inspired by the variational Retinex approach \cite{TVRetinex} and the two-stage segmentation method \cite{CXiaoCRaymZTiey1}, in this paper we propose a two-stage method to segment images with intensity inhomogeneity. In the first stage, we decouple the observed image $S$ into illumination $L$ and reflection $R$. This is done by solving a convex minimization problem with an extra smoothing term on the reflection $R$ (precisely, on $r=-\log R$ in \eqref{mainmodel} below), which will be utilized for the thresholding step in the second step. To achieve a balance between computational speed and fine details/boundaries in the segmentation, we propose to implement our model with either TV or tight frame regularisation in our first stage, which will be useful to understand the connection and difference between TV and the tight frame framework, as explored in \cite{CDOS2012}.

The contribution of this paper are the followings. First, we propose a new variational model to segment images with intensity inhomogeneity. Our model combines the variational Retinex model \eqref{TV_Retinex} with an extra smoothing term and the two-stage method to get a segmentation. Note that the target of our paper is different from \cite{TVRetinex} since theirs is for image enhancement. Secondly, we demonstrate how to employ the Chambolle-Pock algorithm \cite{CP} to solve the minimization problems we proposed. In our algorithm, there is only one loop and every updates of the variables are exact. Convergence of our method is ensured and the convergence rate is also known and could be improved. However, the numerical scheme in \cite{TVRetinex} contains an inner loop, and some updates of the variables are thus not exact and the theoretical convergence of their method is unknown, or at least, needs some extra work. In this regard, our numerical scheme serves as a good illustrative template to avoid such problem.

The rest of the paper is organized as follows. In Section \ref{sec:background}, we give a brief introduction to the theory of tight-frame regularisation. In Section \ref{sec:proposed_model}, we formulate and discuss our proposed model. In Section \ref{sec:primal_dual}, we propose to solve the first stage of our model in its primal-dual form by the Chambolle-Pock algorithm. In Section \ref{sec:numerical_experiments}, we numerically compare our approach with some other methods \cite{CV,Li_MRI,YBTBmul,YBTBtwo,Kaihua2016level}. In the last section, we conclude the discussion with possible future improvements.

\section{Tight frame regularisation} \label{sec:background}

In this section, we briefly introduce the tight-frame theory used in the segmentation model with frame based regularisation \cite{DCS}. The following introduction summarizes materials from \cite{DCS,SZ2010}. Readers interested in the theory of tight-frames and framelets can consult \cite{DI1992, RA1997, SZ2010,DB2010, DCS}.

A tight frame of $L_2(\mathbb{R})$ is a countable set $X\subset L_2(\mathbb{R})$ satisfying
\begin{equation*}
f = \sum_{h\in X}\langle f,h \rangle h, \quad \forall f\in L_2(\mathbb{R}),
\end{equation*}
where $\langle \cdot,\cdot \rangle$ is the inner product of $L_2(\mathbb{R})$. For given $\Psi = \{ \psi_1,\cdots,\psi_v\}\subset L_2(\mathbb{R})$, we call the collection of the dilations and the shifts of $\Psi$
\begin{equation} \label{tight_frame_system}
X(\Psi)=\{ \psi_{t,j,k}:1\leq t \leq v, j,k\in\mathbb{Z}\} \quad \text{with} \quad \psi_{t,j,k}=2^{j/2}\psi_t(2^j\cdot-k)
\end{equation} an affine system. We call $X(\Psi)$ a tight wavelet frame, and call $\psi_t,t=1,\cdots,v$ the (tight) framelets, when $X(\Psi)$ forms a tight frame of $L_2(\mathbb{R})$.

To construct a set of framelets from multiresolution analysis, one usually starts from a compactly supported refinable function $\phi\in L_2(\mathbb{R})$ (a scaling function) with a finitely supported sequence $h_0\in \ell_2(\mathbb{Z})$ (a refinement mask) satisfying
\begin{equation*}
\phi(x) = 2\sum_{k\in\mathbb{Z}} h_0(k)\phi(2x-k),
\end{equation*} or in the Fourier domain
\begin{equation*}
\widehat{\phi}(2\cdot)=\widehat{h}_0\widehat{\phi}.
\end{equation*} Here $\widehat{\phi}$ is the Fourier transform of $\phi$, and $\widehat{h}_0$ is the Fourier series of $h_0$ defined by
\begin{equation*}
\widehat{h}_0(\omega) = \sum_{k\in\mathbb{Z}} h_0(k)e^{-ik\omega}, \quad \omega \in \mathbb{R}.
\end{equation*}
It can be seen that $\widehat{h}_0(0)=1$, which means that a refinement mask of a refinable function must be a low pass filter. For a given compactly supported refinable function, the construction of a tight framelet system is to find a finite set $\Psi$ whose elements can be represented by
\begin{equation*}
\psi_t(x) = 2\sum_{k\in \mathbb{Z}} h_t(k)\phi(2x-k)
\end{equation*}
with finite supported sequence $h_t\in\ell_2(\mathbb{Z})$ with $2\pi$-periodic Fourier series, or in the Fourier domain
\begin{equation*}
\widehat{\psi}_t(2\cdot)=\widehat{h}_t\widehat{\phi}.
\end{equation*}
According to the unitary extension principle (UEP) \cite{RA1997}, the tight framelet system $X(\Psi)$ in \eqref{tight_frame_system} generated by $\Psi$ forms a tight frame in $L_2(\mathbb{R})$  provided that the masks $\widehat{h}_t$ for $t=0,1,\cdots,v$ satisfy
\begin{equation*}
\sum_{t=1}^v \widehat{h}_t(\xi)\overline{\widehat{h}_t(\xi+\gamma\pi)} = \delta_{\gamma,0}, \quad \gamma = 0,1,
\end{equation*} for almost all $\xi\in\mathbb{R}$. While $h_0$ corresponds to a low pass filter, $\{h_t:t=1,2,\cdots,v\}$ must correspond to high pass filters by the UEP. In our implementation, we adopt the piecewise linear B-spline framelet. The corresponding refinable function is $\phi(x) = \max\{1-\abs{x},0\}$, and the refinement mask is $\widehat{h}_0(\xi)=\cos^2(\frac\xi 2)$, and two framelets $\psi_1$ and $\psi_2$ are determined by $\widehat{h}_1=-\frac{\sqrt{2}i}{2}\sin(\xi)$ and $\widehat{h}_2=\sin^2(\frac{\xi}2)$. The corresponding filters are
\begin{equation} \label{TFfilters}
h_0 = \frac{1}{4}[1,2,1], \quad h_1 = \frac{\sqrt{2}}{4}[1,0,-1], \quad h_2 = \frac{1}{4}[-1,2,-1].
\end{equation} Fig.~\ref{framelets} is a plot of the refinable function $\phi$ and the framelet functions $\psi_1$ and $\psi_2$.

\begin{figure*}[htbp]
\begin{center}
\begin{minipage}[t]{4 cm}
\includegraphics[height=2.8 cm]{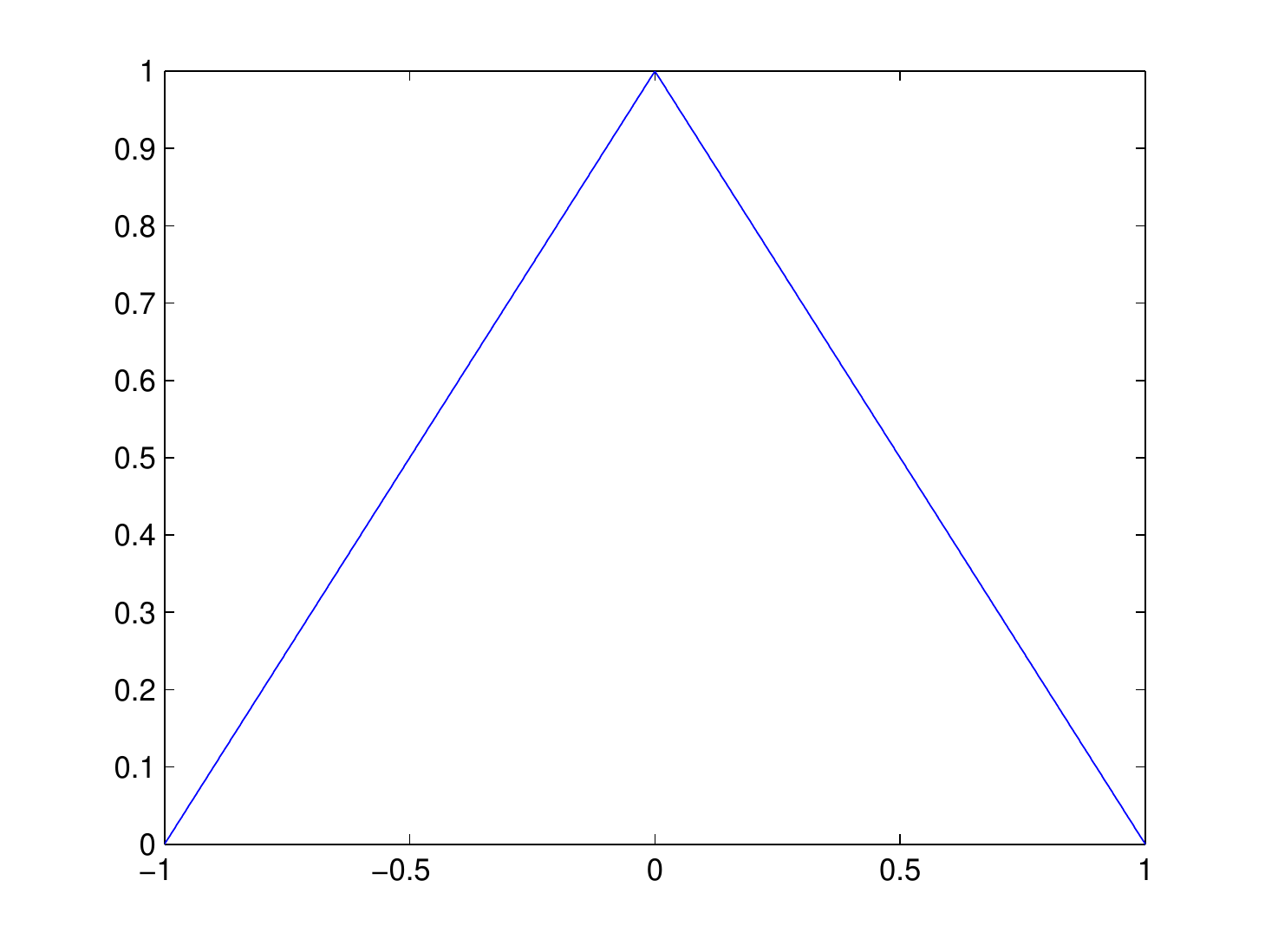}\\
\centering{(a) Refinable function $\phi$}
\end{minipage}
\begin{minipage}[t]{4 cm}
\includegraphics[height=2.8 cm]{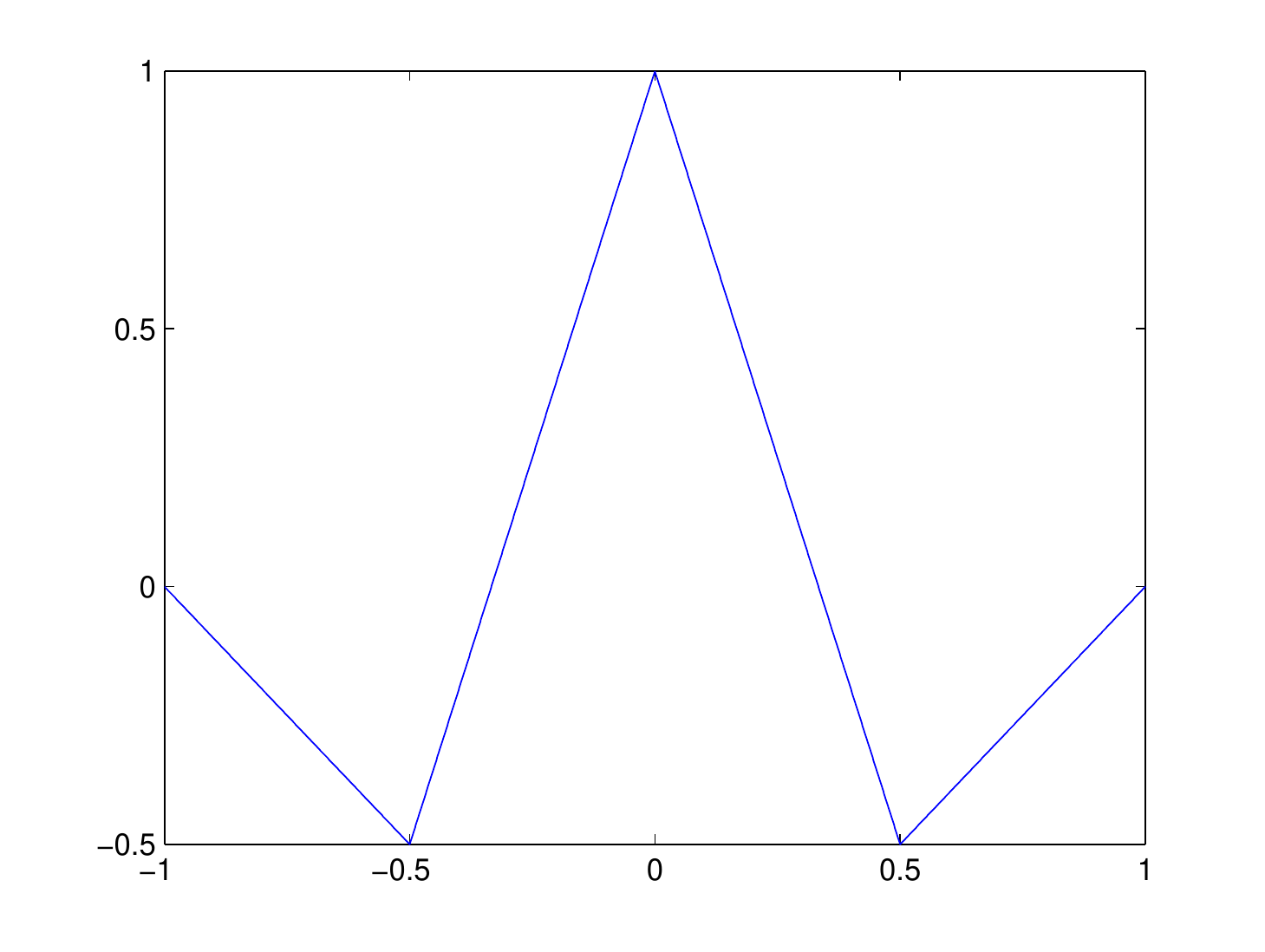}\\
\centering{(b) $\psi_1$}
\end{minipage}
\begin{minipage}[t]{4 cm}
\includegraphics[height=2.8 cm]{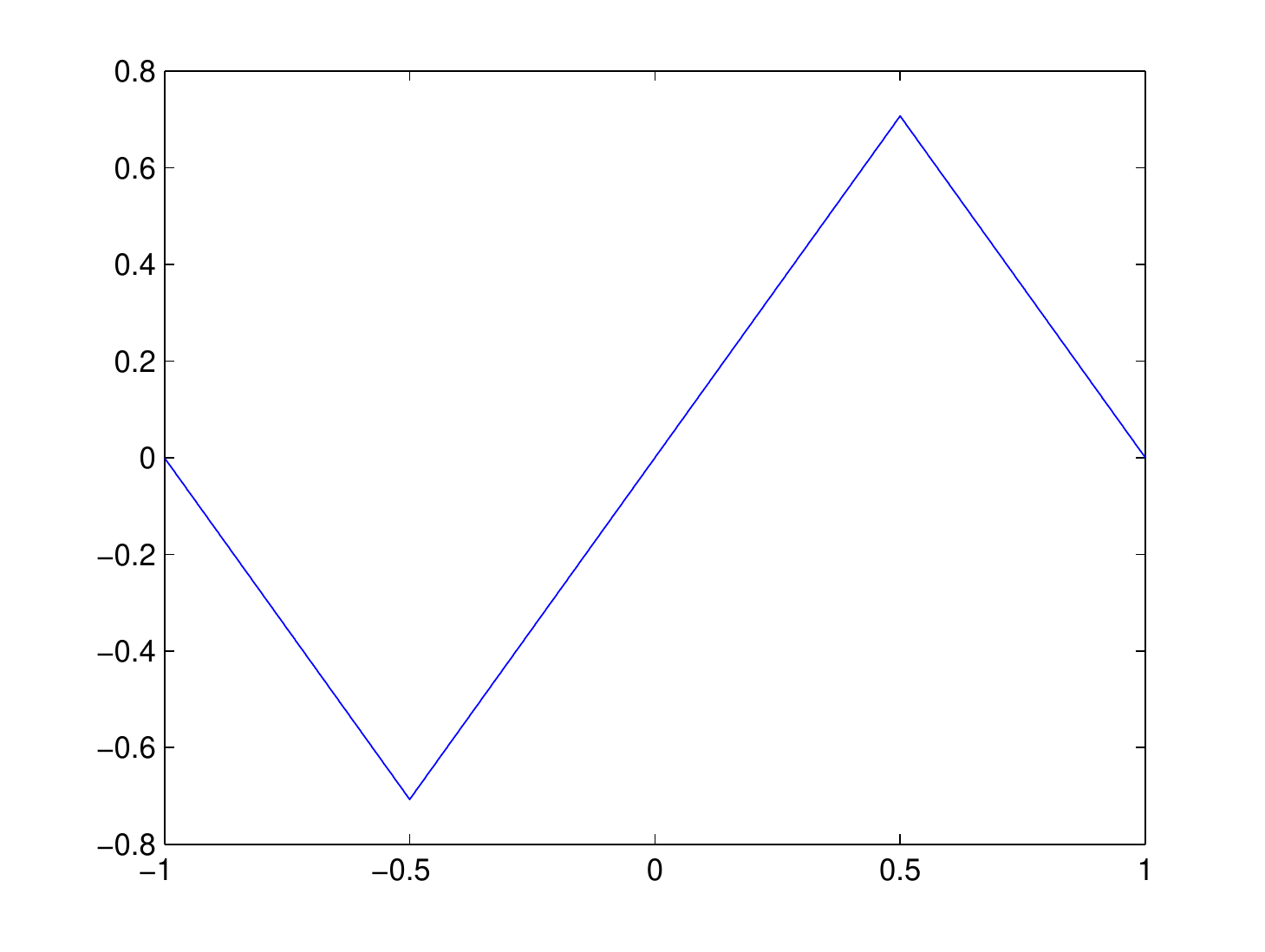}\\
\centering{(c) $\psi_2$}
\end{minipage}
\end{center}
\caption{\label{framelets}}
\end{figure*}

The $d$-dimensional framelet system for $L_2(\mathbb{R}^d)$ can be constructed by tensor products of one-dimensional framelets. Indeed, if we have one scaling function $\phi$ and $v$ tight framelets $\psi_1,\cdots, \psi_v$ in $1$D, then after tensor product, we obtain a tight frame system generated by one scaling function and $(v+1)^d-1$ tight framelets.

In the discrete setting, we regard a discrete image $f$ as the coefficients $\{f_i=\langle f_c,\phi(\cdot -i)\rangle \}$ up to a dilation, where $f_c$ is the continuous function,  $\phi$ is the refinable function associated with the framelet system, and $\langle \cdot,\cdot \rangle$ is the inner product in $L_2(\mathbb{R}^d)$. The $K$-level discrete framelet decomposition of $f$ is then the coefficients $\{f_i=\langle f,2^{-K/2}\phi(2^{-K}\cdot -i)\rangle \}$ at a prescribed coarsest level $K$, and the framelet coefficients are
\begin{equation*}
\{ \langle f, 2^{-k/2}\psi_t(2^{-k}\cdot -j)\rangle, \quad 1\leq t \leq (v+1)^d-1\}
\end{equation*} for $1\leq k\leq K$.

For a discrete $d$-dimensional image, we can concatenate it column-wise to a vector in $\mathbb{R}^n$, where $n$ is the total number of pixels in the image. Then the framelet decomposition and reconstruction can be represented by matrix multiplications $Wf$ and $W^T \eta$ ($\eta$ is the framelet coefficients) respectively, where $W\in \mathbb{R}^{m\times n}$ with $m=n(v+1)^d$, and $W$ satisfies the ``\emph{perfect reconstruction property}'' $W^TW = I$ by UEP. The matrix $W$ has the form
\begin{equation}
W = \left( \begin{array}{c}
H_0 \\
H_{1,1} \\
H_{1,2} \\
\vdots  \\
H_{K,(v+1)^d-1}
\end{array} \right),
\end{equation} where $H_0$ is the submatrix of $W$ corresponding to the decomposition with respect to the refinable function, and $H_{k,t}$ with $1\leq k \leq K$ and $1\leq t \leq (v+1)^d-1$ are the submatrix of $W$ corresponding to the decomposition at the $k$-th level with respect to the $t$-th framelet. For example, for one level decomposition with the piecewise linear B-spline framelet in 2-dimension, the matrix $W$ is of size $9n\times n$.

As noted in \cite{DCS}, frame based regularisation mainly has two advantages over the TV regularisation. First, piecewise smooth functions have sparser representations under tight-frame systems, and penalizing $\ell_1$-norm of $Wu$ generally should generate better results than penalizing $\norm{\nabla u}_1$ as confirmed by researches in image restoration problems \cite{COS2009, CD2004,CCPW2007, DTV2007,ESQD2005,FSM2009}. Secondly, $\norm{Wu}_1$ contains more geometric information of the image $u$ than $\norm{\nabla u}_1$ since it contains more filtering directions.

\section{The proposed model} \label{sec:proposed_model}
Our proposed model has two stages. In the first stage, we aim to decouple a target image into illumination and reflection parts by changing the model in the variational Retinex  approach \cite{TVRetinex}. For most images the intensity of a point is determined by two parts, the intensity of the illumination received at this point, and the ratio of the illumination reflected from this point \cite{HBK1977}. Assuming that for one object the ability to reflect illumination is homogeneous, it is natural to model the observed intensity inhomogeneity to be induced by varying illumination \cite{LZ1994}. The varying illumination may have different sources: the varying distances from a point lighting source, the uneven thickness of clouds that filter sun lights, or the biased magnetic field generated in an MRI machine. Regardless of the sources, we assume that illumination should vary smoothly \cite{TVRetinex}.

Let us consider in the discrete setting. Denote $\Omega$ to be the image domain. As in \cite{TVRetinex}, assume that an image $S\in\mathbb{R}^{MN\times1}$ (size $M\times N$, concatenated columnise) is decomposed as
\begin{equation} \label{cont_image_model}
S = L\cdot R,
\end{equation} where $L$ represents illumination and $R$ represents reflection, and the multiplication is entrywise. The above formula can represent different modalities of image acquisition: for photos taken by optical cameras, $L$ represents the intensity of light, and $R$ represents the reflectivity of underlying objects; for magnetic resonance imaging (MRI), $L$ represents the intensity of magnetic field, and $R$ represents the susceptibility of organs. Similar to \cite{TVRetinex}, we assume that intensity inhomogeneity comes entirely from the illumination $L$ and $0< R\leq 1$, where $R \to 0$ means absolute absorption of the incident illumination and $R = 1$ means absolute reflection. We also assume that the illumination has the range $0< L<\infty$. Taking logarithm on both sides of (\ref{cont_image_model}) and denoting $s = \log(S)$, $l = \log(L)$ and $r = -\log(R)$, we get
\begin{equation} \label{con_image_model_log}
l = s+r.
\end{equation}

To obtain a satisfactory segmentation for the image $S$, in the first stage we estimate the inhomogeneous illumination by decoupling $R=\exp{(-r)}$ and $L=\exp{(l)}$. In the second stage, we get a segmentation of $S$ by thresholding the reflection $R$. In order to separate $r$ from $l$, we combine the variational Retinex model \eqref{TV_Retinex} with the two-stage segmentation model \eqref{CaiModel} (discrete version) to form the following convex optimization problem
\begin{equation} \label{mainmodel}
\min_{r\geq 0,l}\left\{ \norm{GWr}_1 + \frac{\alpha}{2}\norm{\nabla r}_2^2+\frac{\beta}{2}\norm{\nabla l}_2^2+ \frac{\gamma}{2}\norm{l-s-r}_2^2+\frac{\mu}{2}\norm{l}_2^2\right\},
\end{equation}
where $\alpha, \beta,\gamma$ and $\mu$ are positive parameters and similarly to \eqref{TV_Retinex}, $\mu$ will be a fixed small number. The norm $\norm{\cdot}_2$ denotes the $\ell_2$ norm, while the norm $\norm{\cdot}_1$ is the Isotropic TV defined in \cite{TVRetinex}. The term $\norm{GWr}_1$ is to control the length of edges in $r$, see \cite{DCS,TZS2013}. For our implementation with tight frame regularisation, the linear operator $W$ represents the matrix of 1-level framelet decomposition with respect to the piecewise linear B-spline framelet constructed in \cite{RA1997}. In this case, the image is in $\mathbb{R}^{MN\times1}$, so $W$ is a $9MN\times MN$ matrix of the form
\begin{equation*}
    W =
    \begin{pmatrix}
    H_0 \\
    H_1 \\
    \vdots \\
    H_8
    \end{pmatrix}
\end{equation*}
generated by tensor products of the filters \eqref{TFfilters}. As in \cite{DCS} the matrix $G$ is a square positive diagonal weight matrix defined by
\begin{equation*}
G = \diag{\{\mathbf{0},v_1,v_2 \cdots v_8\}},
\end{equation*}
with $\mathbf{0}, v_i \in \mathbb{R}^{1\times MN}$ and
\begin{equation} \label{weight_matrix}
v_i(j) = v(j) = \frac{1}{1+\epsilon\sum_{k=1}^8\abs{(H_k\tilde{s})(j)}^2}.
\end{equation}
Here $\epsilon=50/(MN)$, and $\tilde{s}$ is a smoothed version of $s = \log S$. In our numerical implementation we smooth $s$ by filtering it with a Gaussian kernel with variance $1$. Notice that $G$ can be regarded as the edge indicator function under the framelet transform $W$, see \cite{DCS}. Similar to \cite{ChamboJMIV,CP}, the discrete gradient operator $\nabla$ has the form
\begin{equation}
    \nabla =
    \begin{pmatrix}
    \nabla_x \\
    \nabla_y
    \end{pmatrix},
\end{equation} where $\nabla_x, \nabla_y \in \mathbb{R}^{MN\times MN}$ represents the horizontal and vertical finite difference of the discrete image.  For example, we define
\begin{equation}
(\nabla_x u)_{i,j} = \left\{ \begin{array}{cl} u_{i+1,j}-u_{i,j} & i = 1,2,\cdots,M-1 \\ 0 & i=N. \end{array}\right.
\end{equation}
As in \cite{TVRetinex}, the term $\norm{l}_2^2$ guarantees that \eqref{mainmodel} is strictly convex and there exists a unique minimizer.

In \eqref{mainmodel}, we use weighted tight frame regularisation to get a convex model.
When $G$ is the identity matrix and $W=\nabla$, it becomes the TV regularisation model:
\begin{equation} \label{mainmodelTV}
    \min_{r\geq 0,l}\left\{ \norm{\nabla r}_1 + \frac{\alpha}{2}\norm{\nabla r}_2^2+\frac{\beta}{2}\norm{\nabla l}_2^2+ \frac{\gamma}{2}\norm{l-s-r}_2^2+\frac{\mu}{2}\norm{l}_2^2\right\}
\end{equation}
which is also convex. From numerical experiments we observe that the tight-frame model \eqref{mainmodel} can produce segmentation with finer details while the TV model requires less time.

Let us turn to the second stage: thresholding to get the segmentation result. As physically, the reflection better describes the objects in the image (see Fig. \ref{Fish}(g) and (h) for instance). Therefore the segmentation of the original image should be basically based on the reflection $R = \exp{\{-r\}}$. After obtaining $R$ from model \eqref{mainmodel} or \eqref{mainmodelTV}, for simplicity, in the second-stage we propose to get a segmentation of the original image $S$ by a simple thresholding on $R$.  For $K$ phase segmentation ($K\geq 2$), assume that we have re-scaled $R$ to have $0\leq R\leq1$, and have the $K+1$ thresholds $0=\rho_0<\rho_1<\rho_2< \cdots <\rho_{K-1}<\rho_{K}=1$. Then the $i$th phase is defined to be the pixels satisfying $\rho_{i-1}\leq R(x)<\rho_i$. Note that one does not need to recalculate the first stage when trying different number of phases or thresholds. Therefore changing the number of phases or thresholds does not cost extra computational time.

\section{The primal-dual algorithm for minimization}\label{sec:primal_dual}

Because of the convexity of the minimization problems \eqref{mainmodel} and \eqref{mainmodelTV}, many methods can be used to solve them. For example, the primal-dual algorithms \cite{ ChamboJMIV,CP,PDChan1, PDChan2}, which can be easily adapted to a number of non-smooth convex optimization problems and is easy to implement; the alternating direction method with multipliers (ADMM) \cite{BPCPE,FB} or Split-Bregman algorithm \cite{GO,CXiaoCRaymZTiey1}, which is convergent and well-suited to large-scale convex problems. In this section, we propose to solve our models \eqref{mainmodel} and \eqref{mainmodelTV} by the Chambolle-Pock algorithm \cite{CP,Chanyangzeng}, which belongs to one of the primal-dual algorithms. Since the algorithms to solve \eqref{mainmodel} and \eqref{mainmodelTV} are essentially the same, we only present the algorithm for \eqref{mainmodel} and we leave the details of solving \eqref{mainmodelTV} to interested readers.

First let us fix some notations. Recall that the images are in $\mathbb{R}^{MN\times1}$ (size $M\times N$, concatenated column-wise). For two vectors $u$, $v$ of the same size, we say $u\leq v$ if $u(i)\leq v(i)$ for all $i$. For a $9MN\times 1$ vector
\begin{equation*}
p = \begin{pmatrix} p_0 \\ \vdots \\ p_8 \end{pmatrix},
\end{equation*} denote $\abs{p}_2$ to be an $MN\times 1$ vector defined by
\begin{equation*}
\abs{p}_2 (i) = \sqrt{\sum_{k=0}^8 p_k(i)^2},\quad \forall i.
\end{equation*} We say $\widetilde{p}\in \mathbb{R}^{9MN\times 1}$ is a projection to the boxed constraint $\abs{p}_2\leq q$ with a positive vector $q\in \mathbb{R}^{MN\times 1}$ by defining
\begin{equation} \label{box_constraint}
\widetilde{p}_i(j) = \left\{ \begin{array}{rl} p_i(j) & \text{if } \abs{p}_2(j)\leq q(j), \\  q(j)p_i(j)/\abs{p}_2(j) & \text{if } \abs{p}_2(j)>q(j). \end{array} \right.
\end{equation}

Classically, the primal-dual formulation of \eqref{mainmodel} is given by
\begin{equation} \label{mainmodelPD}
\max_{\{\abs{p}_2 \leq \abs{\diag{G}}_2,q, u\}} \min_{r\geq 0, l} \left\{ \frac{\gamma}{2}\norm{l-s-r}_2^2 + \frac{\mu}{2}\norm{l}_2^2 + \langle Wr,p \rangle + \langle \nabla r,q \rangle + \langle \nabla l,u \rangle - \frac{1}{2\alpha} \norm{q}_2^2 - \frac{1}{2\beta}\norm{u}_2^2 \right\}.
\end{equation}

Denote
\begin{equation} \label{operator_def}
x = \begin{pmatrix} r \\ l \end{pmatrix},\quad y = \begin{pmatrix} p \\ q \\ u \end{pmatrix}, \quad K = \begin{pmatrix} W &  0 \\  \nabla & 0\\ 0 & \nabla \end{pmatrix},
\end{equation} and define
\begin{equation}
H(x) = \frac{\gamma}2 \norm{l-s-r}_2^2 + \frac{\mu}2 \norm{l}_2^2 + \delta(r), \quad F^*(y) = \frac{1}{2\alpha} \norm{q}_2^2 + \frac{1}{2\beta}\norm{u}^2 + \iota_G(p),
\end{equation} with
\begin{equation*}
\delta(r) = \left\{ \begin{array}{rl} 0 & \text{if } r_i\geq 0, \forall i \\ \infty & \text{otherwise}, \end{array} \right.
\end{equation*} and
\begin{equation*}
\iota_G(p) = \left\{ \begin{array}{rl} 0 & \text{if } \abs{p}_2 \leq \abs{\diag G}_2, \\ \infty & \text{otherwise }. \end{array} \right.
\end{equation*}
Then the primal-dual formulation \eqref{mainmodelPD} can be rewritten as
\begin{equation} \label{pd_general}
\min_x \max_y \left\{ H(x) + \langle Kx, y\rangle - F^*(y) \right\},
\end{equation}
which is exactly the same saddle point problem appearing in \cite{CP}.

Giving the initializations $(p_0,q_0,r_0,l_0,\bar{r}_0,\bar{l}_0)$, the Chambolle-Pock algorithm in \cite{CP} to solve for \eqref{pd_general} is thus given through the following iterations for $n\geq 0$,
\begin{equation} \label{updatep}
p_{n+1}=\argmax_{\abs{p}_2 \leq \abs{\diag{G}}_2} \left\{ \langle W\bar{r}_n,p\rangle - \frac{1}{2\tau}\norm{p-p_n}_2^2 \right\},
\end{equation}
\begin{equation} \label{updateq}
q_{n+1}=\argmax_q \left\{ \langle \nabla \bar{r}_n, q\rangle -\frac{1}{2\alpha}\norm{q}_2^2 -\frac{1}{2\tau}\norm{q-q_n}_2^2 \right\},
\end{equation}
\begin{equation} \label{updateu}
u_{n+1}=\argmax_u \left\{ \langle \nabla \bar{l}_n,u\rangle - \frac{1}{2\beta}\norm{u}_2^2-\frac{1}{2\tau}\norm{u-u_n}_2^2 \right\},
\end{equation}
\begin{multline} \label{updaterl}
(r_{n+1},l_{n+1})=\argmin_{r\geq 0,l} \Big\{ \frac\gamma 2\norm{l-s-r}_2^2+\frac\mu 2\norm{l}_2^2+\langle Wr,p_{n+1}\rangle + \langle \nabla r, q_{n+1}\rangle \\ + \langle \nabla l,u_{n+1}\rangle
+ \frac{1}{2\sigma}\norm{r-r_n}_2^2  +\frac{1}{2\sigma}\norm{l-l_n}_2^2 \Big\},
\end{multline}
\begin{equation} \label{updatebar}
\bar{r}_{n+1} = 2r_{n+1}-r_n,\quad \bar{l}_{n+1} = 2l_{n+1}-l_n.
\end{equation}
The optimization problems (\ref{updatep}--\ref{updaterl}) are all quadratic, so close form solutions can be easily obtained. For example, to solve the minimization problem \eqref{updaterl}, first notice that this problem can be separated into $MN$ single variable minimization problems
\begin{multline} \label{updaterl_single}
\min_{r(i)\geq 0,l(i)} E(r(i),l(i)) =  \Big\{ \frac\gamma 2(l(i)-s(i)-r(i))^2+\frac\mu 2 l(i)^2+a(i)r(i) + b(i)r(i) + c(i)l(i)
\\+ \frac{1}{2\sigma}(r(i)-r_n(i))^2 +\frac{1}{2\sigma}(l(i)-l_n(i))^2 \Big\},
\end{multline} where $a(i) = (W^Tp_{n+1})(i)$, $b(i) = (\nabla^Tq_{n+1})(i)$, and $c(i) = (\nabla^T u_{n+1})(i)$. Then the optimal condition of \eqref{updaterl_single} without the constraint $r(i)\geq 0$  leads to
\begin{equation} \label{Linear_System}
\begin{pmatrix} a_{11} & a_{12} \\ a_{21} & a_{22} \end{pmatrix} \begin{pmatrix} r(i) \\ l(i) \end{pmatrix}  = \begin{pmatrix} d_1 \\ d_2 \end{pmatrix},
\end{equation}where $a_{11} = \gamma+\frac 1\sigma$, $a_{12} = a_{21} = -\gamma $, $a_{22} = \gamma+\mu+\frac{1}{\sigma}$, $d_1 = \frac{1}{\sigma} r_n(i)-a(i)-b(i)-\gamma s(i)$ and $d_2 = \gamma s(i)-c(i)+\frac{1}{\sigma} l_n(i)$. Denote $(r_{n+\frac12}(i),l_{n+\frac12}(i))$ to be the solution to \eqref{Linear_System}, then it is clear that they are the solution to \eqref{updaterl_single} without the constraint. To enforce the constraint, we update as follows:
\begin{equation} \label{updaterl_explicit}
r_{n+1}(i) = \max\{ r_{n+\frac12}(i),0\}, \quad l_{n+1}(i) = \left\{ \begin{array}{rl} l_{n+\frac12}(i) & \text{if } r_{n+\frac12}(i)\geq0, \\ d_2/a_{22} & \text{otherwise}. \end{array} \right.
\end{equation}

\begin{proposition}
The $(r_{n+1}(i),l_{n+1}(i))$ defined in \eqref{updaterl_explicit} solves the constraint  minimization problem \eqref{updaterl_single}.
\end{proposition}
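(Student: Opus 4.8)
The plan is to exploit the fact that the objective $E(r(i),l(i))$ in \eqref{updaterl_single} is a strictly convex quadratic on $\mathbb{R}^2$ whose only constraint is the half-space $\{r(i)\geq 0\}$, so that the verification reduces to two cases according to whether the unconstrained minimizer is feasible.

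First I would record that the symmetric coefficient matrix appearing in \eqref{Linear_System} is positive definite: $a_{11}=\gamma+\tfrac1\sigma>0$ and its determinant equals $a_{11}a_{22}-a_{12}a_{21}=\gamma\mu+\tfrac{2\gamma}\sigma+\tfrac\mu\sigma+\tfrac1{\sigma^2}>0$ since $\gamma,\mu,\sigma>0$. Hence $E$ is strictly convex and coercive, so both the unconstrained problem and the problem constrained to $\{r(i)\geq 0\}$ possess unique minimizers; by definition the unconstrained minimizer is $(r_{n+\frac12}(i),l_{n+\frac12}(i))$, the solution of \eqref{Linear_System}.

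Then I would split into cases. If $r_{n+\frac12}(i)\geq 0$, the unconstrained minimizer is already feasible, hence it is also the constrained minimizer, and \eqref{updaterl_explicit} returns exactly $r_{n+1}(i)=\max\{r_{n+\frac12}(i),0\}=r_{n+\frac12}(i)$ and $l_{n+1}(i)=l_{n+\frac12}(i)$, as required. If instead $r_{n+\frac12}(i)<0$, I claim the constrained minimizer $(\hat r,\hat l)$ satisfies $\hat r=0$. Suppose not, so $\hat r>0$. Since $r_{n+\frac12}(i)<0<\hat r$, the segment joining $(r_{n+\frac12}(i),l_{n+\frac12}(i))$ and $(\hat r,\hat l)$ meets the line $\{r=0\}$ at a point $z'=\theta(r_{n+\frac12}(i),l_{n+\frac12}(i))+(1-\theta)(\hat r,\hat l)$ with $\theta\in(0,1)$; this $z'$ is feasible, and by strict convexity together with $E(r_{n+\frac12}(i),l_{n+\frac12}(i))\leq E(\hat r,\hat l)$ we get $E(z')<\theta E(r_{n+\frac12}(i),l_{n+\frac12}(i))+(1-\theta)E(\hat r,\hat l)\leq E(\hat r,\hat l)$, contradicting optimality of $(\hat r,\hat l)$. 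Therefore $\hat r=0$, and minimizing the strictly convex univariate quadratic $l\mapsto E(0,l)$ over $\mathbb{R}$ amounts to setting its derivative to zero, which is precisely the second equation of \eqref{Linear_System} with $r(i)=0$, giving $\hat l=d_2/a_{22}$. This matches \eqref{updaterl_explicit} in the case $r_{n+\frac12}(i)<0$, completing the argument.

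I expect the only delicate point to be the justification that the constrained minimizer lies on the boundary $\{r(i)=0\}$ whenever the unconstrained one is infeasible; the rest is routine bookkeeping with quadratics. An equivalent route would be to write the KKT conditions for \eqref{updaterl_single} — stationarity with a multiplier $\lambda\geq 0$ attached to $r(i)\geq 0$, primal feasibility, dual feasibility and complementary slackness — and to check directly that $(r_{n+1}(i),l_{n+1}(i))$ from \eqref{updaterl_explicit} satisfies them, but the convexity argument above is shorter and avoids introducing the multiplier explicitly.
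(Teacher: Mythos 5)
Your proof is correct and takes essentially the same route as the paper: the decisive step in both is the convexity argument along the segment from the infeasible unconstrained minimizer $(r_{n+\frac12}(i),l_{n+\frac12}(i))$ to a feasible point, which crosses the line $r(i)=0$, combined with the observation that $d_2/a_{22}$ minimizes $E(0,\cdot)$. The only cosmetic difference is that you argue by contradiction about where the (unique) constrained minimizer lies, whereas the paper compares the candidate point directly against an arbitrary feasible point; your explicit check that the matrix in \eqref{Linear_System} is positive definite is a harmless addition rather than a change of method.
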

\begin{proof}
To show that $(r_{n+1}(i),l_{n+1}(i))$ solves \eqref{updaterl_single} with the constraint, first notice that \eqref{updaterl_single} is convex in both variables and $(r_{n+\frac12}(i),l_{n+\frac12}(i))$ solve the minimization problem \eqref{updaterl_single} without the constraint $r(i)\geq 0$. The case when $r_{n+\frac12}(i)\geq 0$ is clear, and now assume $r_{n+\frac12}(i)<0$. For any other point $(r(i),l(i))$ with $r(i)\geq 0$, denote $(0,l^*(i))$ to be the intersection point of the line segment between $(r(i),l(i))$ and $(r_{n+\frac12}(i), l_{n+\frac12}(i))$ and the line $r(i) = 0$. By the convexity of the objective functional $E$ in \eqref{updaterl_single} and the minimization property of $(r_{n+\frac12}(i), l_{n+\frac12}(i))$, we have $E(0,l^*(i))\leq E(r(i),l(i))$. Therefore, we have $E(r_{n+1}(i),l_{n+1}(i))\leq E(r(i),l(i))$, since in the case of $r_{n+\frac12}<0$, our definition of $l_{n+1}(i)$ minimizes the objective functional in \eqref{updaterl_single} on the line $r(i) = 0$.
\end{proof}

The following algorithm summarizes the procedures to solve the optimization problem~(\ref{mainmodel}).
	
\bigskip

\begin{center}
{\bf Algorithm 1:} Solving (\ref{mainmodel}) by the Chambolle-Pock algorithm \\
\vspace{0.1in}
\begin{tabular}{rll}\hline \\
 1.& Initialize: $p_0=0,q_0 = 0, u_0=0, r_0 = 0, l_0 = s,\bar{r}_0 = r_0, \bar{l}_0 = l_0$. \cr
 2.& Do $k=0, 1,\ldots,$ until convergence or reaches the maximum iteration number \\
  & (a) Update $p_{n+1}$ by projecting $\tau W\bar{r}_n+p_n$ to the boxed constraint $\abs{p}_2 \leq \abs{\diag G}_2$ \eqref{box_constraint} \cr
  & (b) Update $q_{n+1} = (\alpha \tau \nabla \bar{r}_n+\alpha q_n)/{(\tau+\sigma)}$. \cr
  & (c) Update $u_{n+1} = (\tau\beta \nabla \bar{l}_n + \beta u_n)/{(\tau+\beta)}$. \cr
  & (d) Update $r_{n+1}$ and $l_{n+1}$ by \eqref{updaterl_explicit}. \cr
  & (e) Update $\bar{r}_{n+1}$ and $\bar{l}_{n+1}$ by \eqref{updatebar}. \cr

 3.& Output: $r,l$. \cr \\
 \hline
\end{tabular}
\end{center}

\bigskip

In the following, we discuss the convergence of Algorithm 1.
\begin{proposition} \label{prop1}
The saddle point set of (\ref{pd_general}) is nonempty.
\end{proposition}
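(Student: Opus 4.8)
The plan is to reduce the claim to the classical fact that a saddle point of the Lagrangian $L(x,y):=H(x)+\langle Kx,y\rangle-F^*(y)$ exists precisely when the associated primal problem $\min_x\{H(x)+F(Kx)\}$ and the dual problem $\max_y\{-H^*(-K^T y)-F^*(y)\}$ are both solvable with equal optimal values, where $F:=(F^*)^*$. First I would record that $H$ and $F^*$ are proper, convex and lower semicontinuous, and that $K$ in \eqref{operator_def} is a (finite) linear operator, so \eqref{pd_general} is a bona fide convex--concave saddle problem of exactly the type treated in \cite{CP}. I would also observe that $F(Kx)$ is nothing but $\norm{GWr}_1+\frac{\alpha}{2}\norm{\nabla r}_2^2+\frac{\beta}{2}\norm{\nabla l}_2^2$, so that $H(x)+F(Kx)$ coincides term by term with the objective of \eqref{mainmodel}, the constraint $r\ge 0$ being carried by $\delta$.

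Second, I would show that \eqref{mainmodel} attains its infimum. Properness is immediate since, e.g., $x=(0,s)\in\operatorname{dom}H$ gives a finite value; lower semicontinuity is clear because every term is continuous on its domain and $\delta$ is the indicator of the closed convex set $\{r\ge 0\}$. For coercivity, fix $c$ and inspect a sublevel set: the term $\frac{\mu}{2}\norm{l}_2^2$ forces $\norm{l}_2$ bounded, the term $\frac{\gamma}{2}\norm{l-s-r}_2^2$ forces $\norm{l-s-r}_2$ bounded, and the triangle inequality then bounds $\norm{r}_2$. Hence sublevel sets are bounded and closed, so a minimizer $x^*$ exists by Weierstrass. (Uniqueness, although not needed for nonemptiness of the saddle set, follows since the quadratic part is strictly convex in $(r,l)$: if $\gamma\norm{\delta l-\delta r}_2^2+\mu\norm{\delta l}_2^2+\alpha\norm{\nabla\delta r}_2^2+\beta\norm{\nabla\delta l}_2^2=0$ then $\delta l=0$ and $\delta l=\delta r$, so $\delta r=0$.)

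Third, I would invoke Fenchel--Rockafellar duality. Since $F$ is finite-valued it is continuous on the whole finite-dimensional space, in particular at $Kx_0$ for $x_0=(0,s)\in\operatorname{dom}H$; this is the constraint qualification that guarantees zero duality gap and attainment of the dual maximum. Choosing a dual maximizer $y^*$ and pairing it with the primal minimizer $x^*$ yields a pair at which $L$ has no gap, i.e. a saddle point of \eqref{pd_general}, so the saddle point set is nonempty. An equivalent route that bypasses the dual problem is to apply a direct saddle-point existence theorem to $L$: $L(\cdot,y_0)$ is coercive because $H$ itself is coercive (same bookkeeping as above), while $L(x_0,\cdot)\to-\infty$ because the quadratic coercive terms and the indicator $\iota_G$ appearing in $F^*$ dominate the linear term $\langle Kx_0,\cdot\rangle$. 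The main obstacle is precisely that coercivity bookkeeping: the coupling matrix attached to $\norm{l-s-r}_2^2$ is only positive semidefinite, so $\norm{r}_2$ is controlled only indirectly through $\norm{l}_2$ and $\norm{l-s-r}_2$; and one must be careful to check the qualification on the primal side via $F$ (which is finite everywhere) rather than symmetrically on $F^*$, which is not.
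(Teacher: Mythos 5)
Your argument is correct and is, in substance, the approach the paper itself relies on: the paper gives no explicit proof but defers to Proposition~3.2 of \cite{Box} (cf.\ \cite{HL1993}), i.e.\ exactly the classical convex-analysis machinery you invoke (primal attainment via coercivity of $H$ through the $\mu$- and $\gamma$-terms, a constraint qualification that is automatic since $F=(F^*)^*$ is finite-valued on a finite-dimensional space, hence zero duality gap with dual attainment, hence a saddle point; equivalently, a direct coercivity/anti-coercivity saddle-point theorem). Your write-up simply fills in the bookkeeping that the citation outsources, so there is nothing to correct.
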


\noindent
The proof follows the same arguments as in Proposition~3.2 \cite{Box} (cf \cite{HL1993}).

Next we show the condition that guarantees the convergence of Algorithm 1.

\begin{proposition} \label{convergence}
Let $\norm{K}_2$ be the operator 2-norm of $K$ and $(x^{(n)}, y^{(n)})$ be defined by Algorithm~1. If we choose $\tau$ and $\sigma$ such that $\tau \sigma  <1/\norm{K}_2^2$, then $(x^{(n)},y^{(n)})$ converges to a saddle point $(x^*,y^*)$ of \eqref{pd_general}.
\end{proposition}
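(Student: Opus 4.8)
The plan is to recognize the statement as a direct instance of the convergence theorem for the first-order primal--dual algorithm of Chambolle and Pock (Theorem~1 of \cite{CP}), and to verify that its hypotheses are met by the saddle-point problem \eqref{pd_general} and the iteration \eqref{updatep}--\eqref{updatebar}. First I would observe that \eqref{pd_general} already has the generic form $\min_x\max_y\{H(x)+\langle Kx,y\rangle-F^{*}(y)\}$ treated in \cite{CP}, and then check the standing assumptions. The function $H(x)=\frac{\gamma}{2}\norm{l-s-r}_2^2+\frac{\mu}{2}\norm{l}_2^2+\delta(r)$ is proper, convex and lower semicontinuous, being a nonnegative quadratic plus the indicator of the closed convex cone $\{r\geq 0\}$; similarly $F^{*}(y)=\frac{1}{2\alpha}\norm{q}_2^2+\frac{1}{2\beta}\norm{u}_2^2+\iota_G(p)$ is proper, convex and lower semicontinuous, being a sum of quadratics plus the indicator of the closed convex box $\{\abs{p}_2\leq\abs{\diag G}_2\}$; and $K$ in \eqref{operator_def} is a bounded linear operator between finite-dimensional spaces. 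The one remaining hypothesis, nonemptiness of the saddle-point set of \eqref{pd_general}, is exactly the content of Proposition~\ref{prop1}.

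Next I would show that Algorithm~1, i.e. the iteration \eqref{updatep}--\eqref{updatebar}, is literally the Chambolle--Pock scheme applied to \eqref{pd_general} with the dual step size chosen to be $\tau$ and the primal step size chosen to be $\sigma$. The dual half-step is $y_{n+1}=\mathrm{prox}_{\tau F^{*}}(y_n+\tau K\bar x_n)$, which decouples over the blocks $p$, $q$, $u$ of $y$ because $F^{*}$ is separable in them: the $p$-block is the Euclidean projection of $\tau W\bar r_n+p_n$ onto the box $\{\abs{p}_2\leq\abs{\diag G}_2\}$, which is step~(a) through \eqref{box_constraint}; the $q$- and $u$-blocks are proximal steps of the quadratics $\frac{1}{2\alpha}\norm{q}_2^2$ and $\frac{1}{2\beta}\norm{u}_2^2$ and reduce to the closed forms in steps~(b) and~(c). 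The primal half-step is $x_{n+1}=\mathrm{prox}_{\sigma H}(x_n-\sigma K^{T}y_{n+1})$, which, once one expands $\langle Kx,y_{n+1}\rangle=\langle x,K^{T}y_{n+1}\rangle$, is precisely the minimization \eqref{updaterl}; this separates pixelwise into \eqref{updaterl_single}, whose unconstrained optimality condition is the $2\times2$ linear system \eqref{Linear_System} and whose constrained minimizer is \eqref{updaterl_explicit}, as already established in the Proposition preceding Algorithm~1. Finally \eqref{updatebar} is the standard over-relaxation $\bar x_{n+1}=2x_{n+1}-x_n$. Having matched the scheme and the problem, I would invoke Theorem~1 of \cite{CP}: whenever the product of the two step sizes times $\norm{K}_2^2$ is strictly less than $1$, which in our notation reads $\tau\sigma<1/\norm{K}_2^2$, the iterates $(x^{(n)},y^{(n)})$ converge to a saddle point $(x^{*},y^{*})$ of \eqref{pd_general}; this is the assertion to be proved.

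I do not anticipate a genuine obstacle: the only point needing care is the \emph{exactness} of the two proximal half-steps, and this is provided by the explicit formulas \eqref{box_constraint}, steps~(b)--(c), \eqref{Linear_System}, \eqref{updaterl_explicit} together with the Proposition already proved --- which is precisely why Algorithm~1 has a single loop with exact updates. It would also be worth recording a concrete admissible choice of step sizes: since $K^{T}K$ is block diagonal with blocks $W^{T}W+\nabla^{T}\nabla$ and $\nabla^{T}\nabla$, and $W^{T}W=I$ by the perfect reconstruction property, one gets $\norm{K}_2^2=\norm{I+\nabla^{T}\nabla}_2\leq 1+\norm{\nabla}_2^2\leq 9$, so any $\tau,\sigma>0$ with $\tau\sigma<1/9$ fulfils the hypothesis.
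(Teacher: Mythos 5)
Your proposal is correct and follows essentially the same route as the paper, which disposes of the statement in one line by noting it is a special case of Theorem~1 in \cite{CP}; you simply make explicit the verification the paper leaves implicit (convexity and lower semicontinuity of $H$ and $F^{*}$, nonemptiness of the saddle-point set via Proposition~\ref{prop1}, and the identification of steps (a)--(e) with the exact proximal updates of the Chambolle--Pock scheme). Your closing bound $\norm{K}_2^2\leq 9$ also agrees with the paper's Proposition~\ref{operator_bound}, so no gap remains.
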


\noindent
The proposition is a special case of Theorem~1 in \cite{CP}. We remark that the limiting point $x^* = (r^*,l^*)$ is the unique solution pair of \eqref{mainmodel}. To see this, notice that \eqref{pd_general} is the primal-dual formulation of \eqref{mainmodel}. According to Proposition 3.1 of \cite{bookET}, if $(x^*,y^*)$ is a solution to \eqref{pd_general}, then $x^*$ is a solution to \eqref{mainmodel}. Since \eqref{mainmodel} has a unique minimizer, we conclude that $x^*$ is unique.

Lastly, we give an estimate for the bound of $\norm{K}_2$.

\begin{proposition} \label{operator_bound}
For the operator $K$ defined in \eqref{operator_def}, we have $\norm{K}_2 \leq 3$.
\end{proposition}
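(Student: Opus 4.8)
The plan is to exploit the block structure of $K$ together with the perfect reconstruction property $W^{T}W = I$ and a crude spectral bound on the discrete gradient $\nabla$. Writing $x = \begin{pmatrix} r \\ l \end{pmatrix}$, one computes directly from \eqref{operator_def} that
\begin{equation*}
\norm{Kx}_2^2 = \norm{Wr}_2^2 + \norm{\nabla r}_2^2 + \norm{\nabla l}_2^2 .
\end{equation*}
Since $W$ satisfies $W^{T}W = I$ by the UEP, the first term is simply $\norm{Wr}_2^2 = \langle r, W^{T}Wr\rangle = \norm{r}_2^2$, so that $\norm{W}_2 = 1$. It therefore remains only to control $\norm{\nabla r}_2^2$ and $\norm{\nabla l}_2^2$.

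Next I would establish the elementary bound $\norm{\nabla u}_2^2 \le 8\norm{u}_2^2$ for any image $u\in\mathbb{R}^{MN\times 1}$, which is classical (cf.\ \cite{ChamboJMIV}). This follows from $\norm{\nabla u}_2^2 = \norm{\nabla_x u}_2^2 + \norm{\nabla_y u}_2^2$ and, using the inequality $(a-b)^2 \le 2a^2 + 2b^2$ on each forward difference $u_{i+1,j}-u_{i,j}$ (and the fact that the boundary rows/columns contribute a zero difference), one gets $\norm{\nabla_x u}_2^2 \le 4\norm{u}_2^2$ and likewise $\norm{\nabla_y u}_2^2 \le 4\norm{u}_2^2$, hence $\norm{\nabla u}_2^2 \le 8\norm{u}_2^2$.

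Combining the two estimates yields
\begin{equation*}
\norm{Kx}_2^2 \le \norm{r}_2^2 + 8\norm{r}_2^2 + 8\norm{l}_2^2 = 9\norm{r}_2^2 + 8\norm{l}_2^2 \le 9\bigl(\norm{r}_2^2 + \norm{l}_2^2\bigr) = 9\norm{x}_2^2 ,
\end{equation*}
so that $\norm{K}_2 = \sup_{x\neq 0}\norm{Kx}_2/\norm{x}_2 \le 3$, as claimed. (For the TV model \eqref{mainmodelTV} the same argument with $W$ replaced by $\nabla$ gives $\norm{K}_2^2 \le 8 + 8 = 16$, i.e.\ $\norm{K}_2\le 4$, but this is not needed here.)

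There is no real obstacle; the only point requiring a little care is the gradient bound $\norm{\nabla}_2^2\le 8$, and even there the factor of $8$ need not be sharp since any explicit constant suffices to conclude $\norm{K}_2\le 3$ once it is paired with $\norm{W}_2 = 1$. The essential insight is simply that the tight-frame analysis operator is an isometry, which decouples cleanly from the gradient blocks.
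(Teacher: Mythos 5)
Your argument is correct and is essentially the paper's own proof: both rest on $\norm{W}_2 = 1$ (from the perfect reconstruction property $W^{T}W=I$) and $\norm{\nabla}_2^2 \le 8$ (which the paper simply cites from Chambolle, while you rederive it elementarily), combined blockwise to give $\norm{Kx}_2^2 \le (1+8)\norm{r}_2^2 + 8\norm{l}_2^2 \le 9\norm{x}_2^2$, i.e.\ $\norm{K}_2\le 3$. One small caveat about your closing remark: the constant $3$ does depend on the specific estimate $\norm{\nabla}_2^2\le 8$ (a larger gradient constant would only yield a larger bound on $\norm{K}_2$), but since you establish the bound $8$ explicitly this does not affect the validity of the proof.
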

\begin{proof}
Since $W^TW = I$ by the ``perfect reconstruction property'', we have $\norm{W}_2 = 1$. It is known from \cite{ChamboJMIV} that $\norm{\nabla}_2^2 \leq 8$. Then we have
\begin{eqnarray}
\norm{K\begin{pmatrix} r \\ l\end{pmatrix}}_2 &=& \sqrt{\norm{Wr}_2^2+\norm{\nabla r}_2^2+\norm{\nabla l}_2^2}  \nonumber \\
                                              &\leq& \sqrt{(\norm{W}_2^2+\norm{\nabla}^2_2)\norm{r}_2^2+\norm{\nabla}_2^2\norm{l}_2^2} \nonumber \\
                                              &\leq& \sqrt{\norm{W}_2^2+\norm{\nabla}_2^2}\norm{\begin{pmatrix} r \\l \end{pmatrix}}_2 \nonumber \\
                                              &=& 3\norm{\begin{pmatrix} r \\l \end{pmatrix}}_2.
\end{eqnarray}
This shows that $\norm{K}_2 \leq 3$.
\end{proof}

\section{Numerical experiments} \label{sec:numerical_experiments}

In this section, we compare our tight-frame model \eqref{mainmodel} and our TV model \eqref{mainmodelTV} with some other segmentation methods in \cite{Li_MRI,CV,YBT,YBTBmul,Kaihua2016level}.
We report the results of the celebrated Chan-Vese segmentation method \cite{CV} as a baseline. All the methods \cite{Li_MRI,YBT,YBTBmul,Kaihua2016level} are efficient segmentation methods proposed after 2010. The method \cite{YBT} uses continuous max-flow and min-cut method to obtain two-phase segmentation, while the method \cite{YBTBmul} uses the same method to obtain multi-phase segmentation. We use these two methods to demonstrate segmentation results where intensity inhomogeneity is not considered explicitly. The methods in \cite{Li_MRI} and \cite{Kaihua2016level} are both designed to segment images with intensity inhomogeneity. The method \cite{Li_MRI} is a popular  multi-phase segmentation method (with more than 900 citations), while the method \cite{Kaihua2016level} is a more resent result (published in 2016) which can segment either 2 or 4 phase images. We compare with these methods to demonstrate the effectiveness of our model to handle intensity inhomogeneity. The parameters for different methods are chosen by trial and error to get the best results of the respective methods.

In our implementation of Algorithm 1, the parameters $\tau$ and $\sigma$ are fixed to $1$ and $0.1$ respectively. As explained previously, the parameter $\mu$ should be small, and it is fixed to $10^{-5}$ for all the experiments.  The parameters $\alpha,\beta$ and $\gamma$ need to be tuned for different images, and we list the values of them used in the tight frame regularisation in Table~\ref{table:list_parameters_TF}. For the weight matrix \eqref{weight_matrix}, we set $\epsilon = \frac{50}{MN}$, where $MN$ is the total number of pixels in a given image. To get a good implementation, we fix the iteration number of Algorithm 1 to $1000$. For the implementation of \eqref{mainmodelTV}, the parameters $\tau$, $\sigma$ and $\mu$ are fixed to $1$, $0.15$ and $10^{-5}$, and we terminate the iteration when $\norm{r_{n+1}-r_n}/\norm{r_n}\leq 10^{-5}$, or the maximum iteration number $1000$ is reached. In the second stage of our method, we set the thresholds $\rho$ manually to get good segmentation results. The values of the thresholds for the tight frame method are also included in Table~\ref{table:list_parameters_TF}.

\begin{table}[h!]
\centering
\begin{tabular}{|c|c|c|c|c|}
  \hline
  & $\alpha$ & $\beta$ & $\gamma$ & threshold\\  \hline
  Figure~\ref{Fish} & $10^{-3}$ & $80$ & $8$ & $0.9$ \\ \hline
  Figure~\ref{Boat} & $3$ & $45$ & $1.5$ & $0.89$ \\ \hline
  Figure~\ref{Feiji} & $1$ & $12$ & $5$ & $0.9$ \\ \hline
  Figure~\ref{Cells} & $5$ & $13$ & $1$ & $0.85$ \\ \hline
  Figure~\ref{Weakvessels} & $3*10^{-2}$ & $20$ & $3$ & $0.72$ \\ \hline
  Figure~\ref{Liver} & $5*10^{-2}$ & $15$ & $2$ & $0.35$ \\ \hline
  Figure~\ref{Animals} & $10^{-2}$ & $60$ & $5$ & $[0.55,0.75]$ \\ \hline
  Figure~\ref{Camel} & $5*10^{-2}$ & $15$ & $10$ & $[0.6,0.95]$ \\ \hline
  Figure~\ref{Brain} & $5*10^{-2}$ & $150$ & $50$ & $[0.4,0.6]$ \\ \hline

\end{tabular}
\caption{List of parameters for the tight frame method}
\label{table:list_parameters_TF}
\end{table}

\begin{figure*}[htbp]
\begin{center}
\begin{minipage}[t]{3 cm}
\includegraphics[height=2.3 cm]{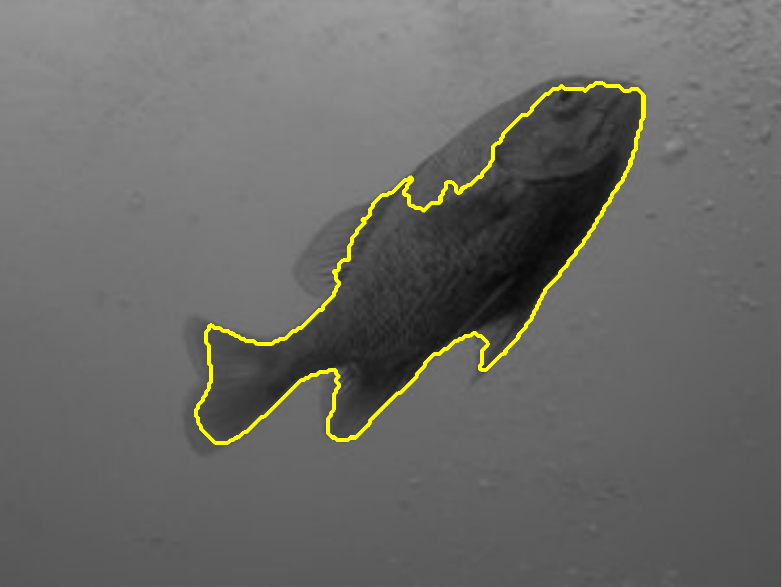}\\
\centering{(a) Chan--Vese \cite{CV}}
\end{minipage}
\begin{minipage}[t]{3 cm}
\includegraphics[height=2.3 cm]{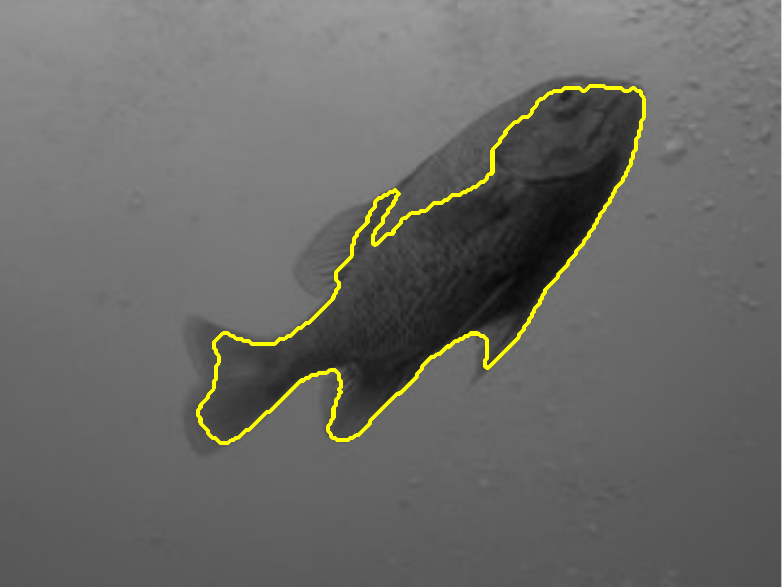}\\
\centering{(b) Yuan \cite{YBT}}
\end{minipage}
\begin{minipage}[t]{3 cm}
\includegraphics[height=2.3 cm]{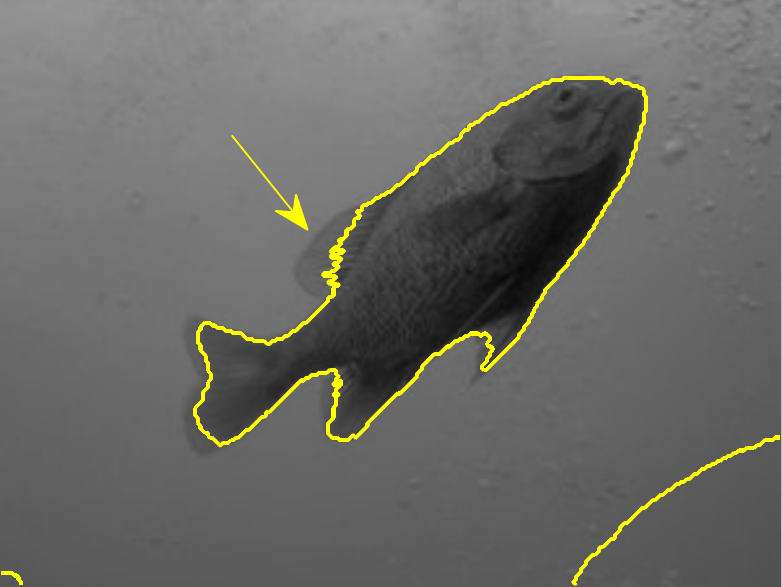}\\
\centering{(c) Li \cite{Li_MRI}}
\end{minipage}
\begin{minipage}[t]{3 cm}
\includegraphics[height=2.3 cm]{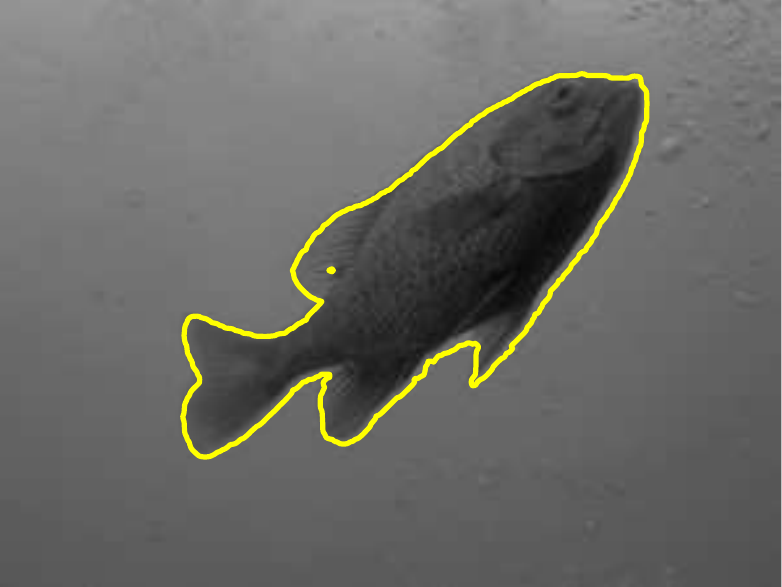}\\
\centering{(d) Zhang \cite{Kaihua2016level}}
\end{minipage}
\begin{minipage}[t]{3 cm}
\includegraphics[height=2.3 cm]{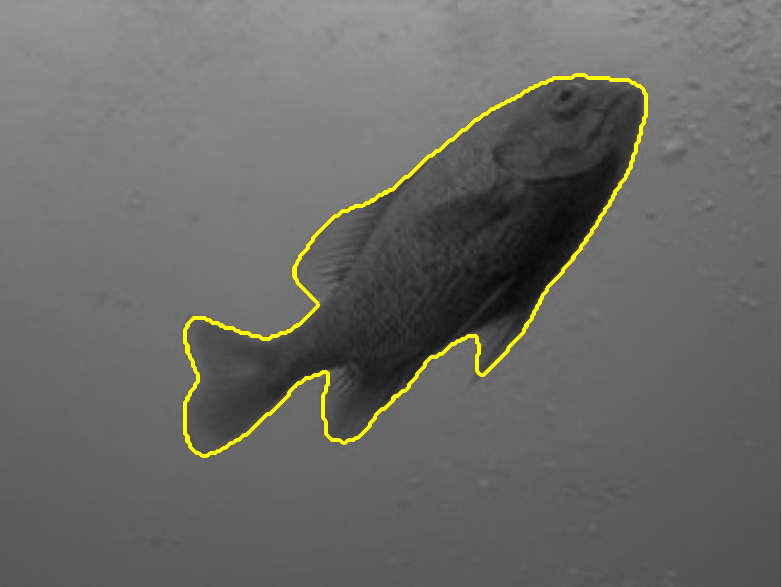}\\
\centering{(e) TV \eqref{mainmodelTV}}
\end{minipage}
\begin{minipage}[t]{3 cm}
\includegraphics[height=2.3 cm]{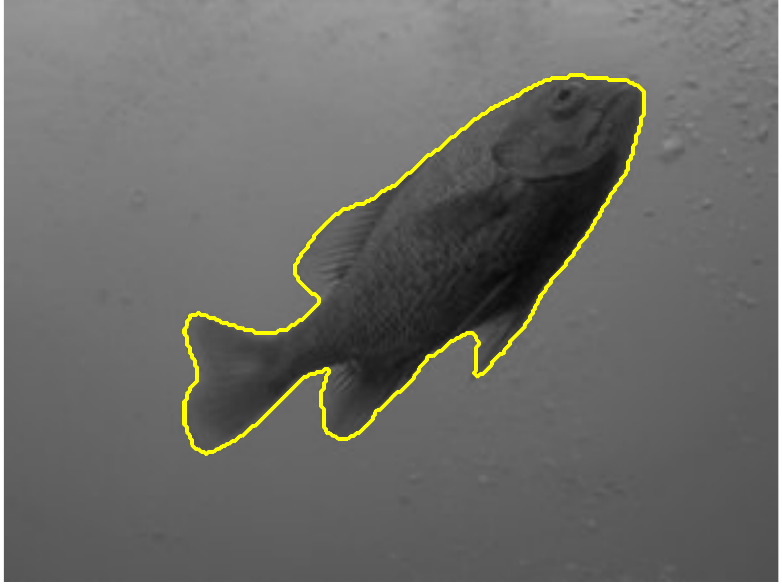}\\
\centering{(f) Tight-frame \eqref{mainmodel}}
\end{minipage}
\begin{minipage}[t]{3 cm}
\includegraphics[height=2.3 cm]{Fish_TFW_L-eps-converted-to.pdf}\\
\centering{(g) Illumination $L$ of Tight-frame}
\end{minipage}
\begin{minipage}[t]{3 cm}
\includegraphics[height=2.3 cm]{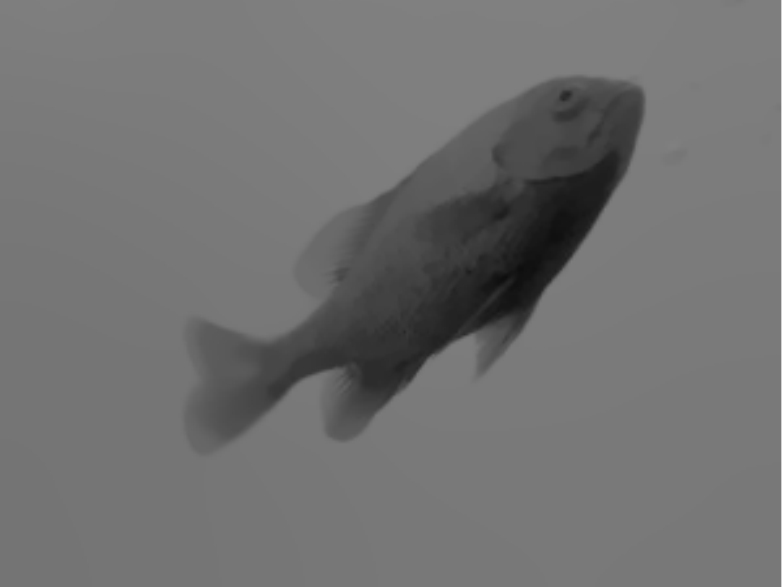}\\
\centering{(h) Reflection $R$ of Tight-frame}
\end{minipage}
\end{center}
\caption{\label{Fish}(a)--(d) results of \cite{CV}, \cite{YBT}, \cite{Li_MRI} and \cite{Kaihua2016level} respectively, (e) TV \eqref{mainmodelTV}, (f) Tight-frame \eqref{mainmodel},  (g) illumination by \eqref{mainmodel}, (h) reflection by \eqref{mainmodel}.}
\end{figure*}

\emph{Example \ref{Fish}}: Because of water and inhomogeneous light, this image is difficult to segment: the brightness of the water varies with the top being brighter, and the fish has both dark and bright parts. Both Fig. \ref{Fish}(a) from \cite{CV} and (b) from \cite{YBT} fail to segment the brighter part of the fish. Fig.~\ref{Fish}(c) from \cite{Li_MRI} segments the fish as a whole, but the segmentation lacks details (please refer to the arrow in (c)), and dark corners of the water are included in the segmentation. Fig.~\ref{Fish}(d) from \cite{Kaihua2016level} gives a successful segmentation with fine details. At the same time, no dark part of the water is included.  Fig. \ref{Fish}(e) and (f) from our methods both get successful segmentations, with no corner of lower intensity included. Fig.~\ref{Fish}(g) is the illumination part from the tight-frame regularisation \eqref{mainmodel}. We see that this image is very smooth and upper part of the water is brighter. Fig.~\ref{Fish}(h) is the reflection from the tight-frame regularisation \eqref{mainmodel}. It is clear that the fish stands out in this image, with the background flattened. Meanwhile the fins of the fish, which is weak in the original image, is well preserved in the reflection image, and this facilitates a good segmentation of the fish.

\begin{figure*}[htbp]
\begin{center}
\begin{minipage}[t]{4.5 cm}
\includegraphics[height=3.35 cm]{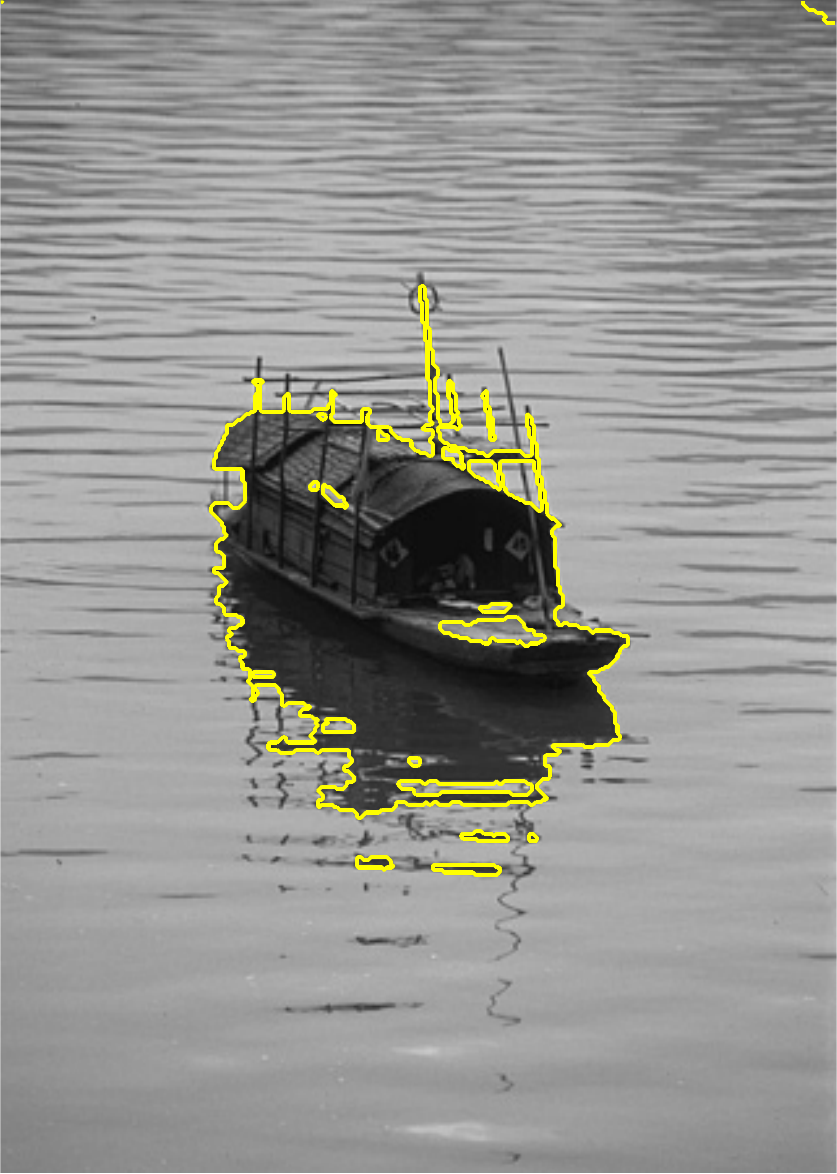}\\
\centering{(a) Chan--Vese \cite{CV}}
\end{minipage}
\begin{minipage}[t]{4.5 cm}
\includegraphics[height=3.35 cm]{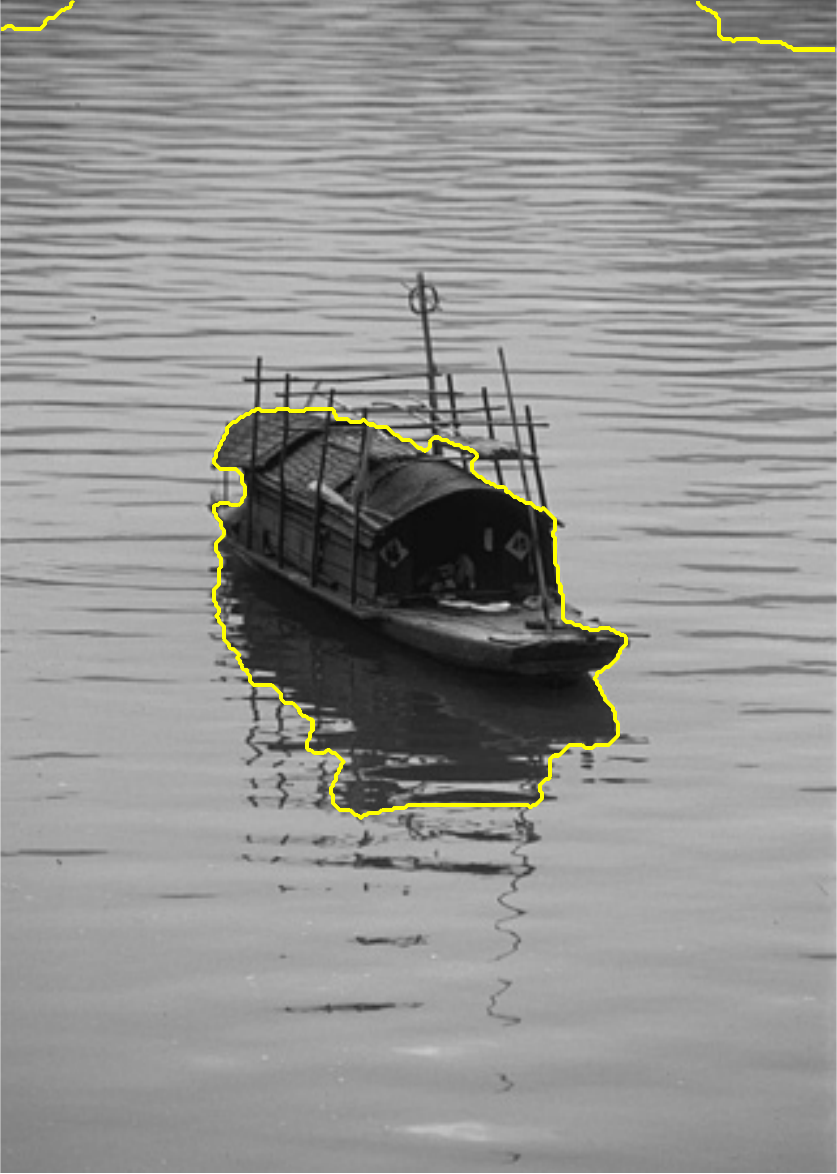}\\
\centering{(b) Yuan \cite{YBT}}
\end{minipage}
\begin{minipage}[t]{4.5 cm}
\includegraphics[height=3.35 cm]{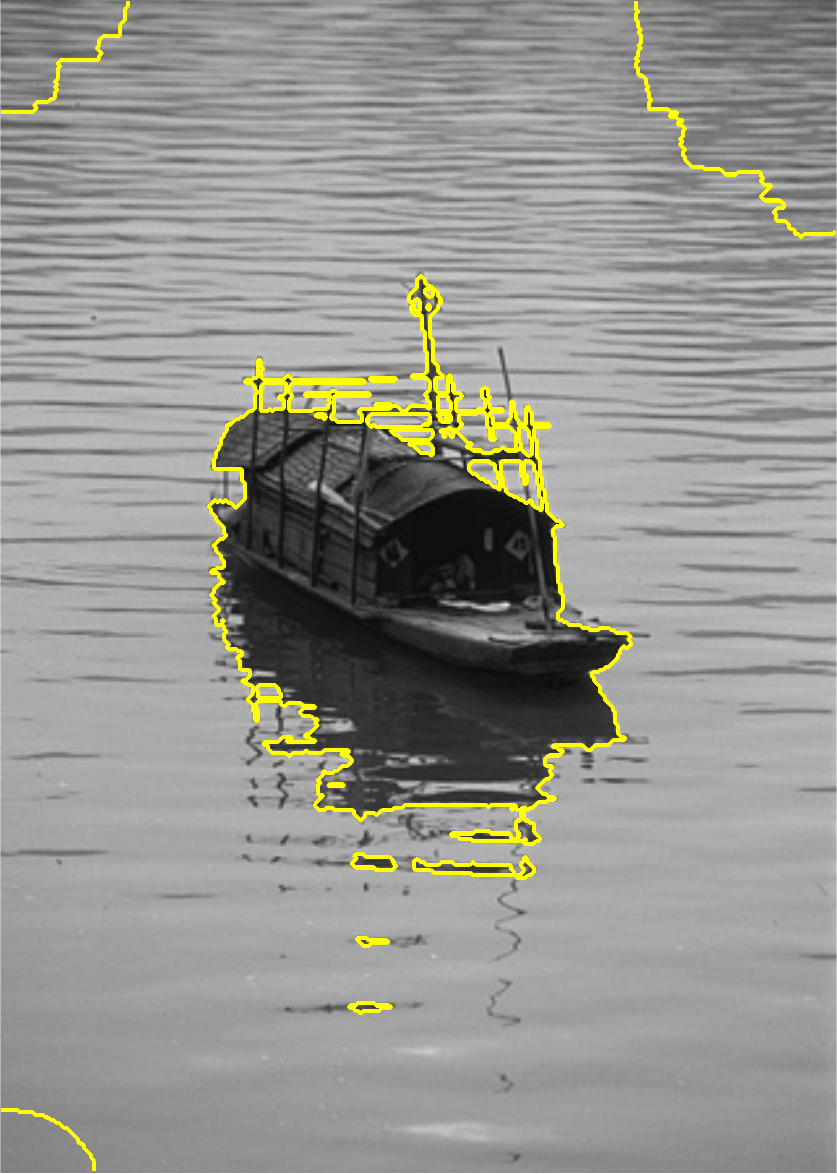}\\
\centering{(c) Li \cite{Li_MRI}}
\end{minipage}
\begin{minipage}[t]{4.5 cm}
\includegraphics[height=3.35 cm]{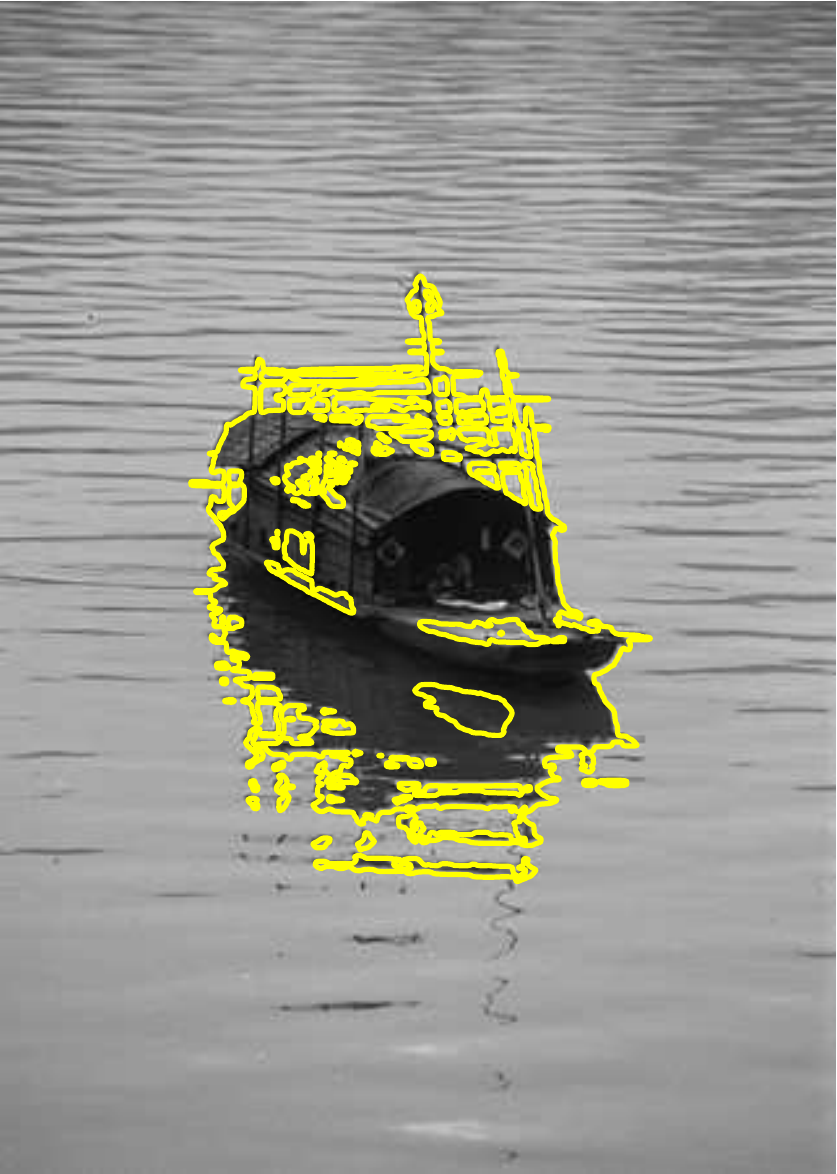}\\
\centering{(d) Zhang \cite{Kaihua2016level}}
\end{minipage}
\begin{minipage}[t]{4.5 cm}
\includegraphics[height=3.35 cm]{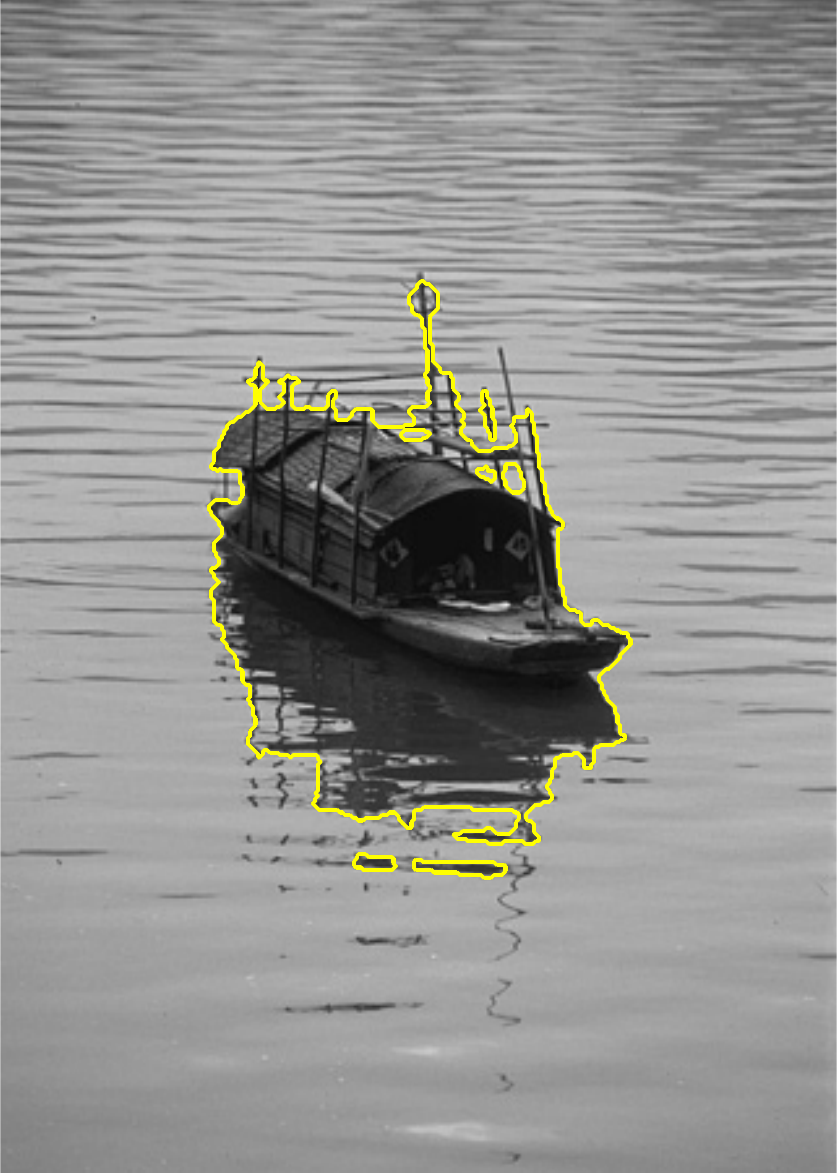}\\
\centering{(e) TV \eqref{mainmodelTV}}
\end{minipage}
\begin{minipage}[t]{4.5 cm}
\includegraphics[height=3.35 cm]{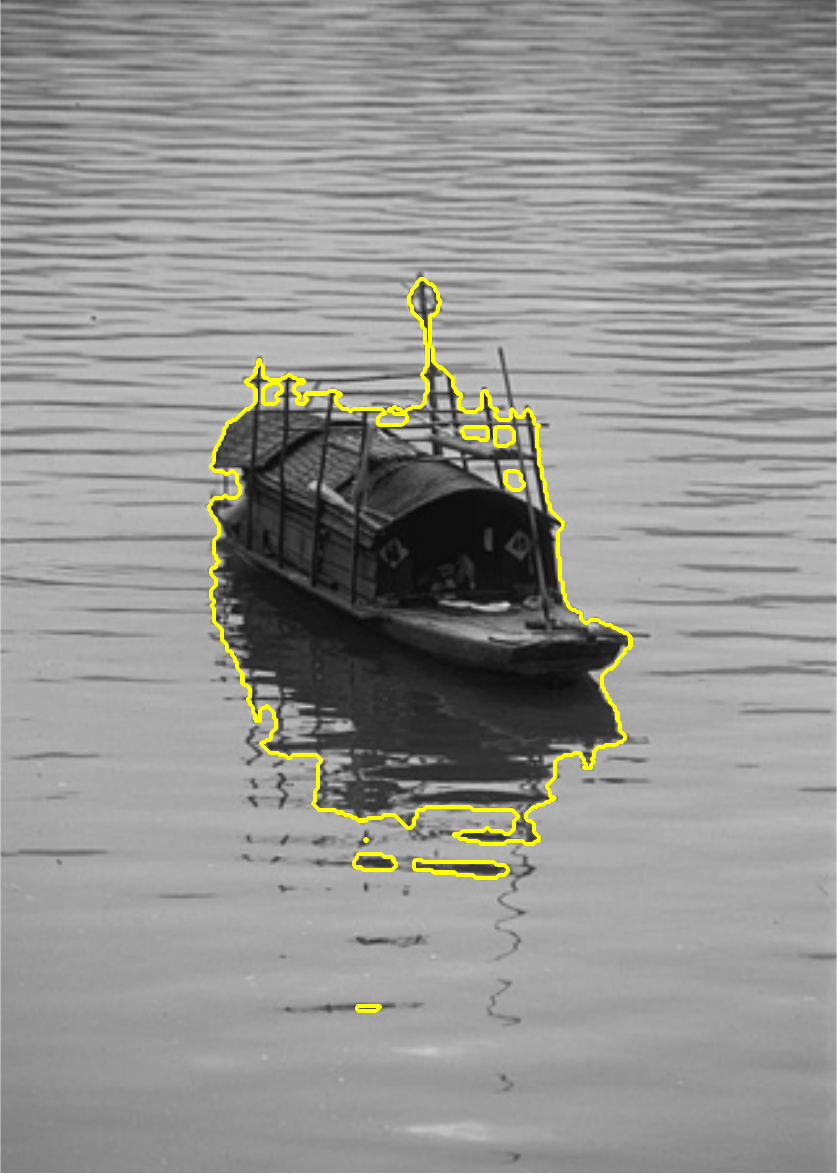}\\
\centering{(f) Tight-frame \eqref{mainmodel}}
\end{minipage}
\end{center}
\caption{\label{Boat}(a)--(d) results of \cite{CV}, \cite{YBT}, \cite{Li_MRI} and \cite{Kaihua2016level} respectively, (e) TV \eqref{mainmodelTV}, (f) Tight-frame \eqref{mainmodel}. }
\end{figure*}

\emph{Example \ref{Boat}}: This image is from the Berkeley Segmentation Dataset and Benchmark \cite{BerkleyDataset}. It is challenging to segment this image for two reasons: the corners of this image have darker intensities, and the boat has inhomogeneous intensity. The goal of this segmentation is to segment the boat and its reflection in the water as a single object, without including pixels from the four corners of the image. Fig. \ref{Boat}(a) from \cite{CV} fails to segment the boat as a whole, while (b) from \cite{YBT} includes the upper corners of the image in the segmentation. Fig.~\ref{Boat}(c) from \cite{Li_MRI} gives a segmentation with fine details, but it included three corners of the water in the segmentation. Fig.~\ref{Boat}(d) from \cite{Kaihua2016level} fails to segment the boat as a whole. Fig. \ref{Boat}(e) and (f) from our methods both get successful segmentations, with no corner of lower intensity included.

\begin{figure*}[htbp]
\begin{center}
\begin{minipage}[t]{4 cm}
\includegraphics[height=2.5 cm]{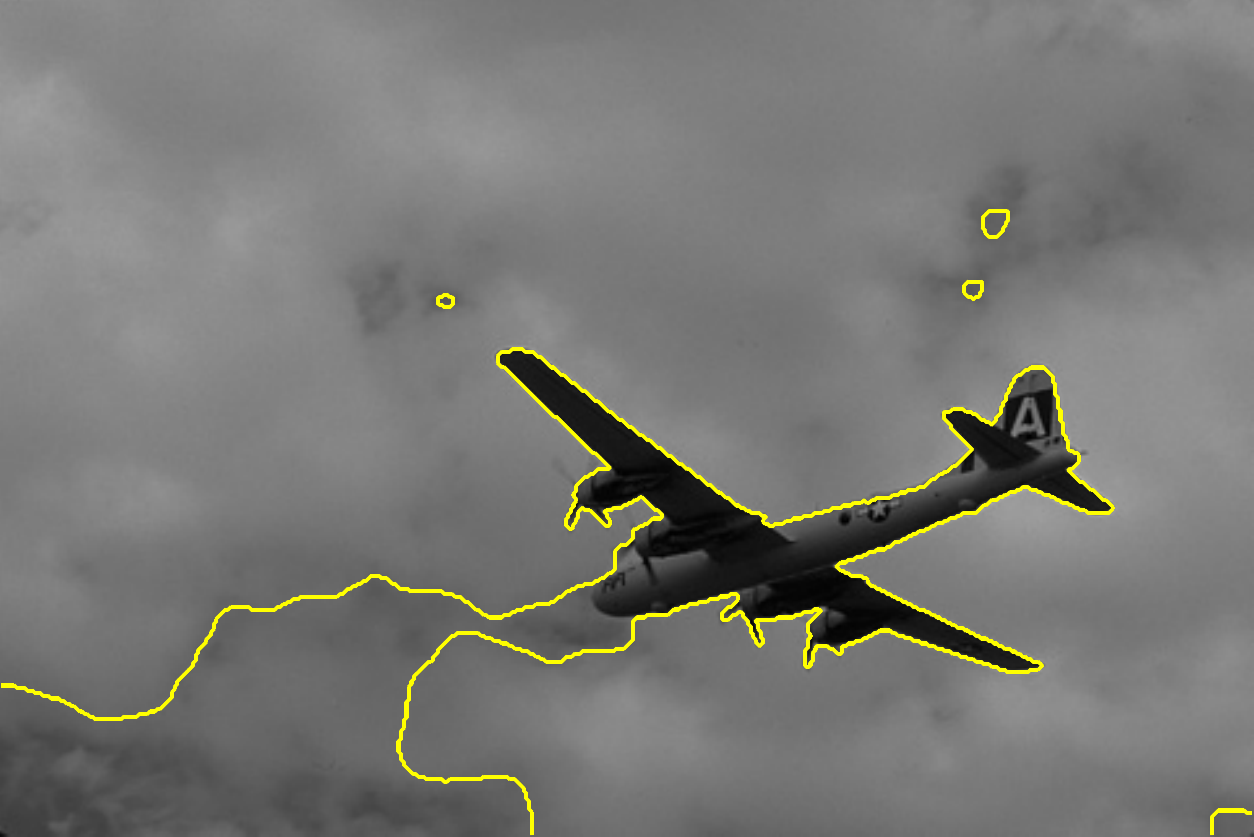}\\
\centering{(a) Chan--Vese \cite{CV}}
\end{minipage}
\begin{minipage}[t]{4 cm}
\includegraphics[height=2.5 cm]{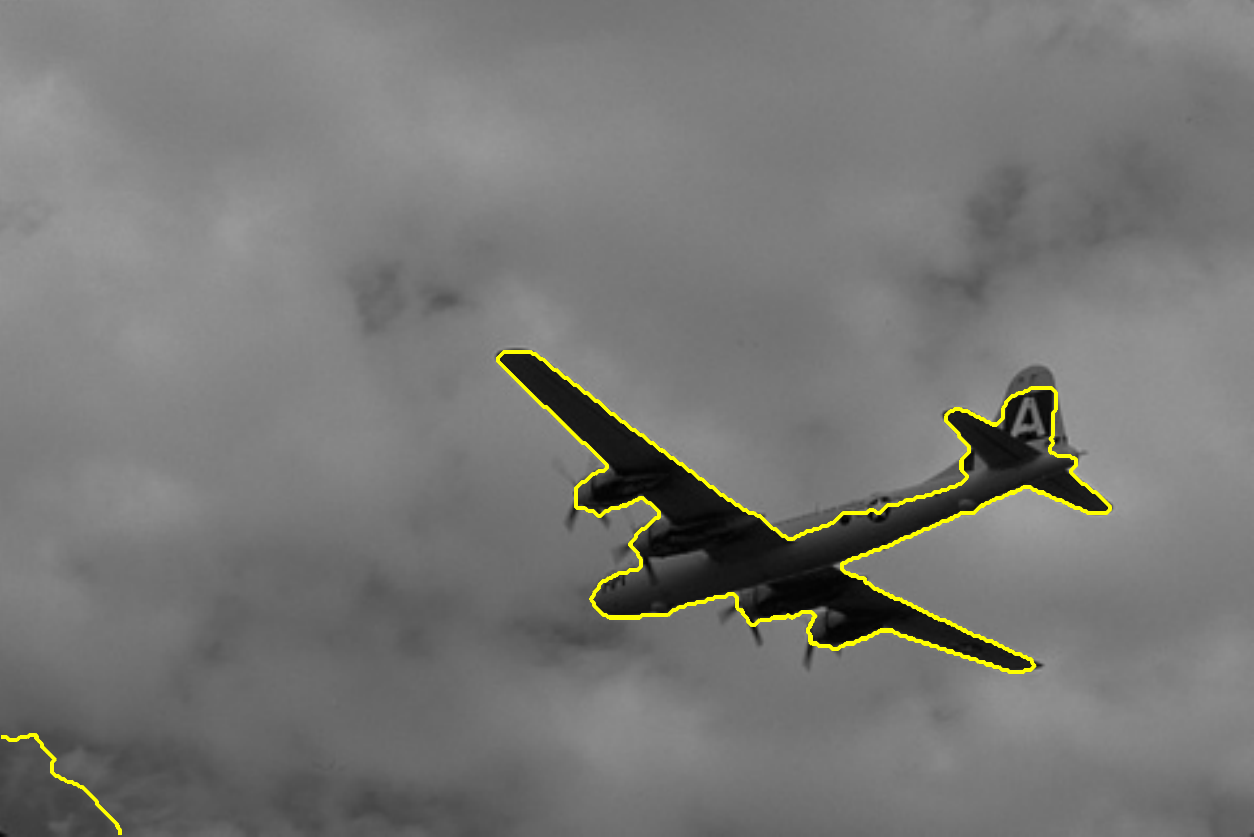}\\
\centering{(b) Yuan \cite{YBT}}
\end{minipage}
\begin{minipage}[t]{4 cm}
\includegraphics[height=2.5 cm]{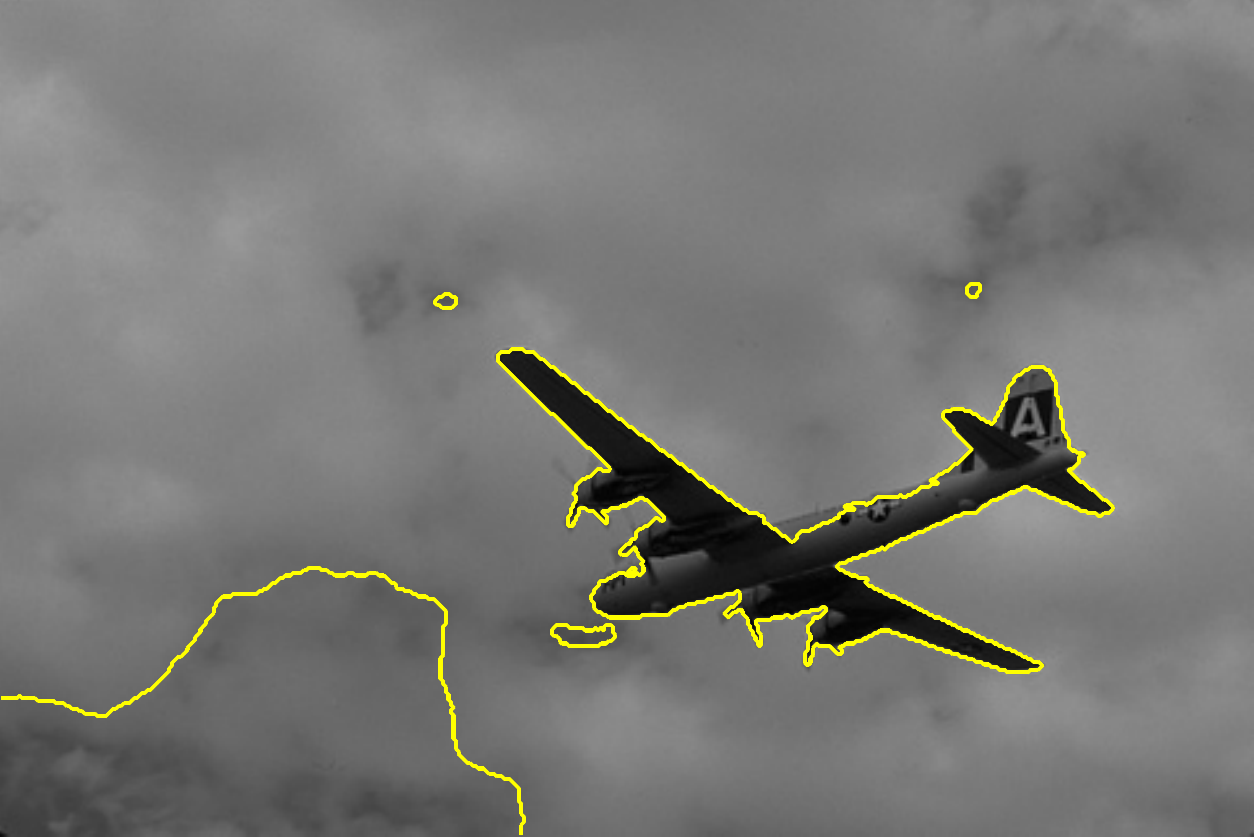}\\
\centering{(c) Li \cite{Li_MRI}}
\end{minipage}
\begin{minipage}[t]{4 cm}
\includegraphics[height=2.5 cm]{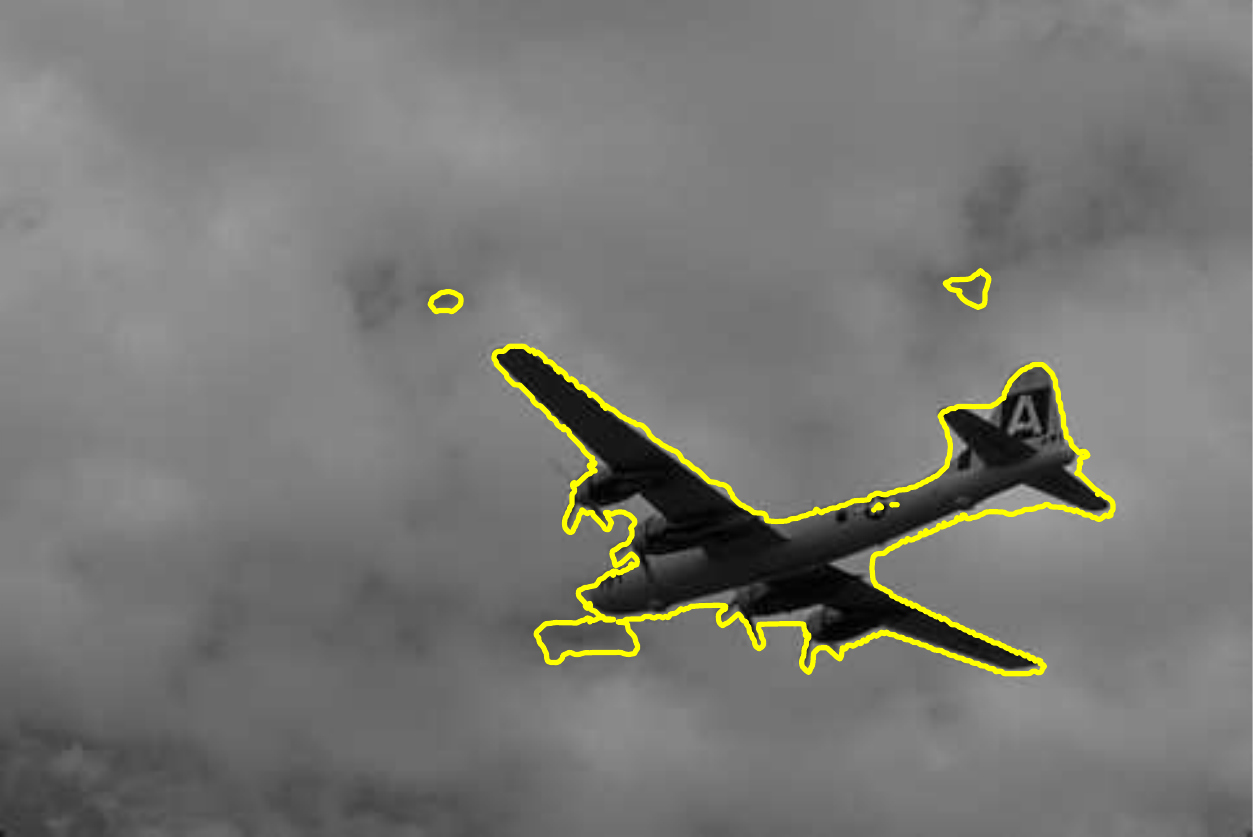}\\
\centering{(d) Zhang \cite{Kaihua2016level}}
\end{minipage}
\begin{minipage}[t]{4 cm}
\includegraphics[height=2.5 cm]{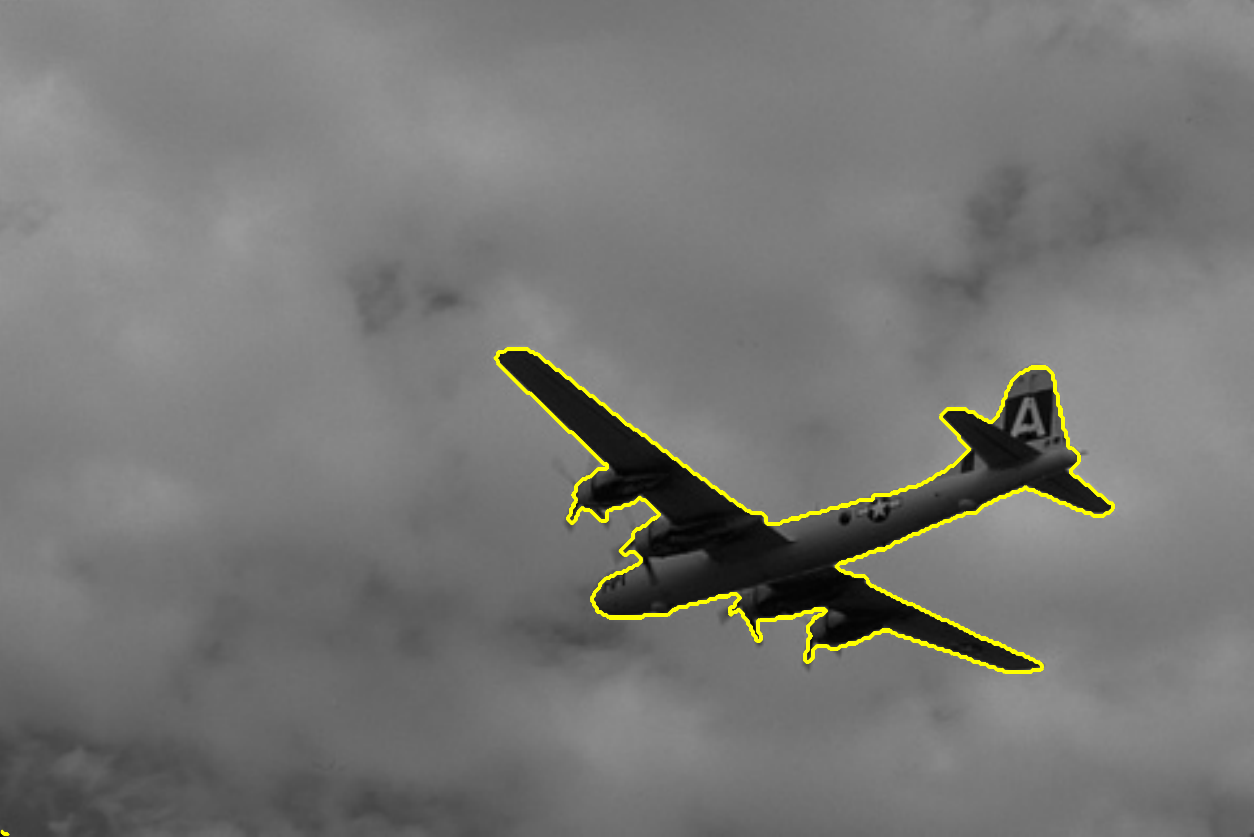}\\
\centering{(e) TV \eqref{mainmodelTV}}
\end{minipage}
\begin{minipage}[t]{4 cm}
\includegraphics[height=2.5 cm]{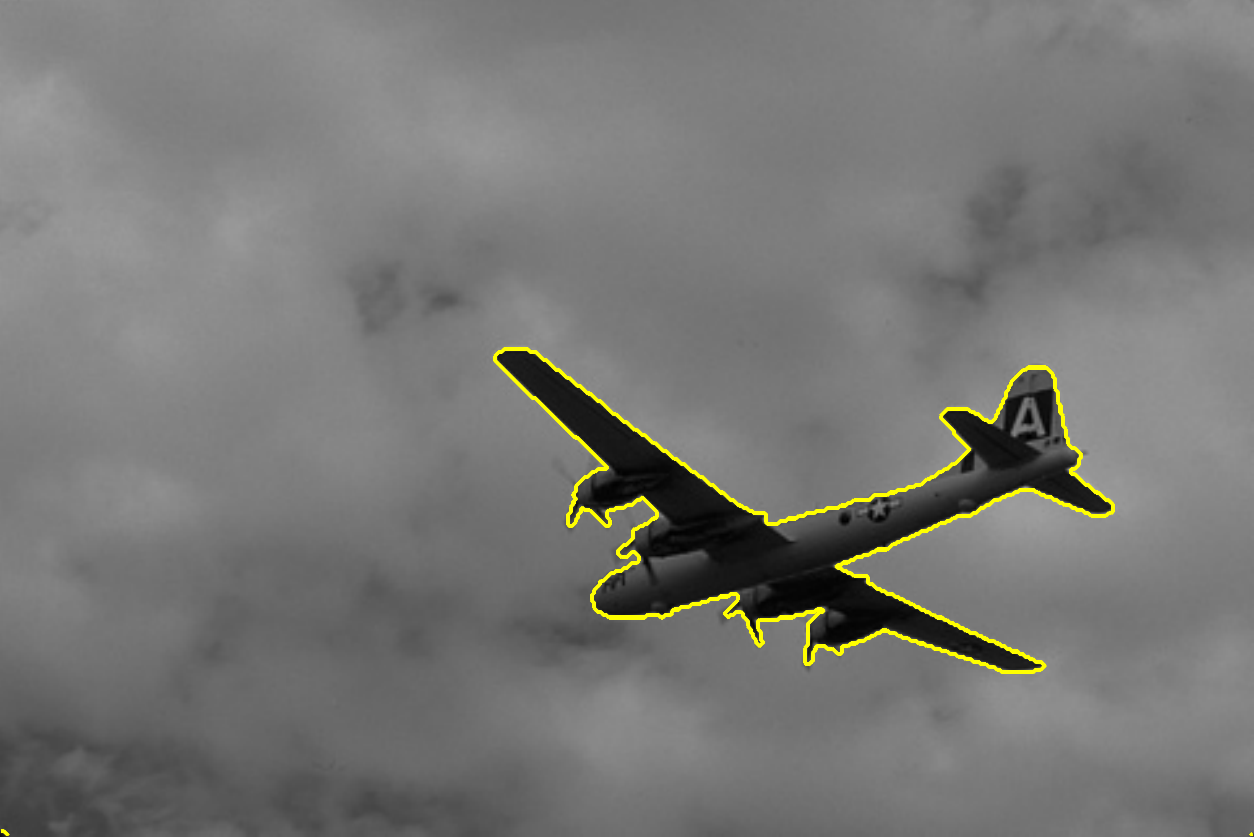}\\
\centering{(f) Tight-frame \eqref{mainmodel}}
\end{minipage}
\begin{minipage}[t]{3.5 cm}
\includegraphics[height=2.3 cm]{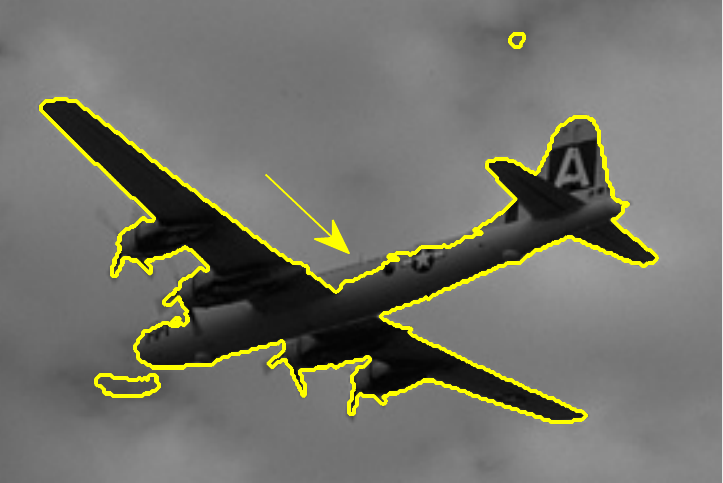}\\
\centering{(g) Detail of (c)}
\end{minipage}
\begin{minipage}[t]{3.5 cm}
\includegraphics[height=2.3 cm]{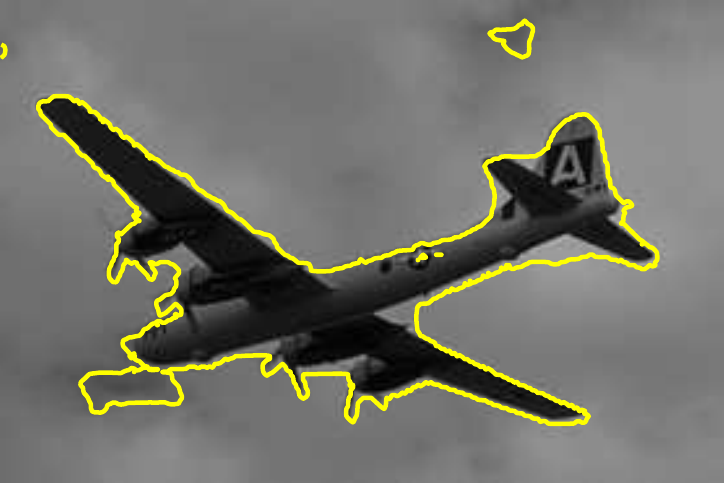}\\
\centering{(h) Detail of (d)}
\end{minipage}
\begin{minipage}[t]{3.5 cm}
\includegraphics[height=2.3 cm]{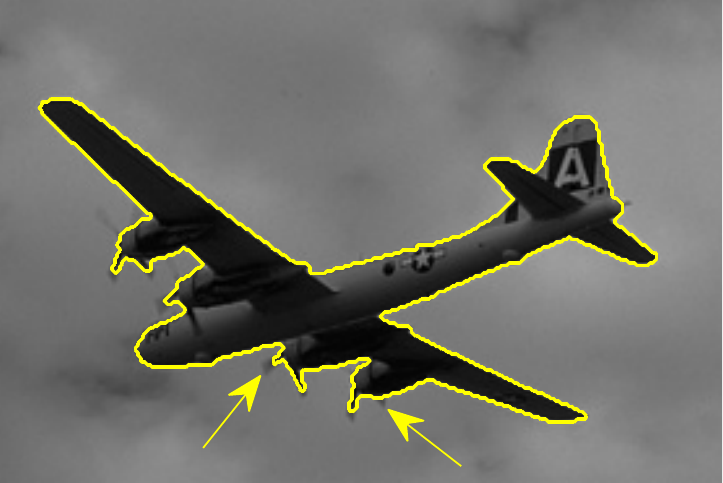}\\
\centering{(i) Detail of (e)}
\end{minipage}
\begin{minipage}[t]{3.5 cm}
\includegraphics[height=2.3 cm]{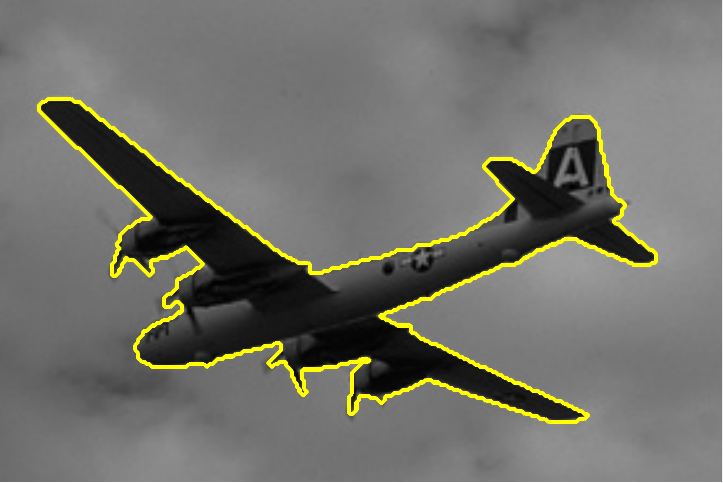}\\
\centering{(j) Detail of (f)}
\end{minipage}
\end{center}
\caption{\label{Feiji}(a)--(d) results of \cite{CV}, \cite{YBT}, \cite{Li_MRI} and \cite{Kaihua2016level} respectively, (e) TV \eqref{mainmodelTV}, (f) Tight-frame \eqref{mainmodel}, (g)--(j) details of (c)--(f). }
\end{figure*}

\emph{Example \ref{Feiji}}: This image is from the Berkeley Segmentation Dataset and Benchmark \cite{BerkleyDataset}. The varying intensities of the cloud, as well as some very weak boundaries of the aircraft, makes this segmentation very challenging. Fig.~\ref{Feiji}(a) from \cite{CV}  includes a large part of the lower left corner of the background in the segmention. Fig.~\ref{Feiji}(b) from \cite{YBT} segments the aircraft as a whole, but the boundary of the aircraft is not well detected, and the lower left corner is included in the segmentation. Fig.~\ref{Feiji}(c) from \cite{Li_MRI} segments the aircraft as a whole, but it also includes a large part of the lower left corner in the segmentation, and the boundary of the aircraft is not well detected. Fig.~\ref{Feiji}(d) from \cite{Kaihua2016level} manages to segment the aircraft as a single object, but the boundary of the aircraft is not well detected. It is clear that both Fig.~\ref{Feiji}(e) and (f) from our methods give very good segmentations, with the aircraft segmented with fine details. Fig.~\ref{Feiji}(g)--(j) are the details of segmentations in (c)--(f), where the arrow in (g) indicates a wrongly detected boundary, while the arrows in (i) indicate that the propellers of the aircraft are not well segmented (compared to (j)).

\begin{figure*}[htbp]
\begin{center}
\begin{minipage}[t]{4 cm}
\includegraphics[height=3 cm]{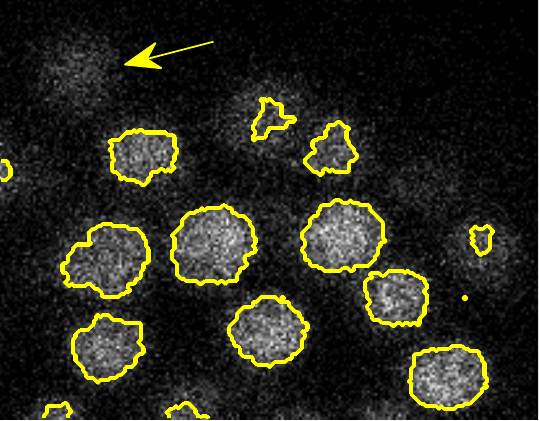}\\
\centering{(a) Chan--Vese \cite{CV}}
\end{minipage}
\begin{minipage}[t]{4 cm}
\includegraphics[height=3 cm]{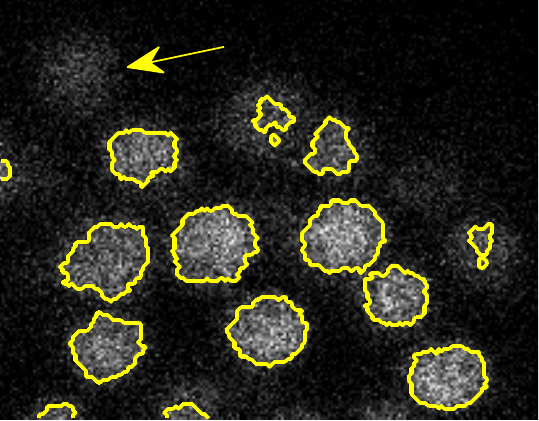}\\
\centering{(b) Yuan \cite{YBT}}
\end{minipage}
\begin{minipage}[t]{4 cm}
\includegraphics[height=3 cm]{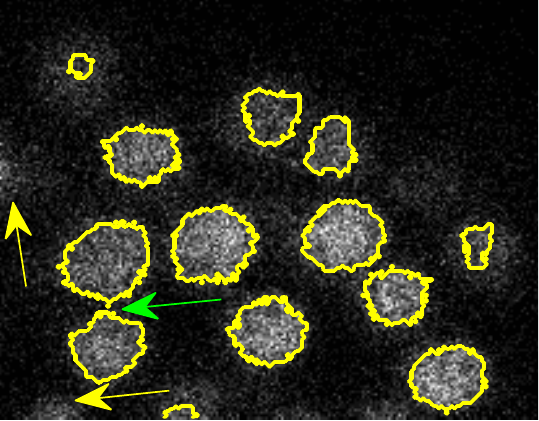}\\
\centering{(c) Li \cite{Li_MRI}}
\end{minipage}
\begin{minipage}[t]{4 cm}
\includegraphics[height=3 cm]{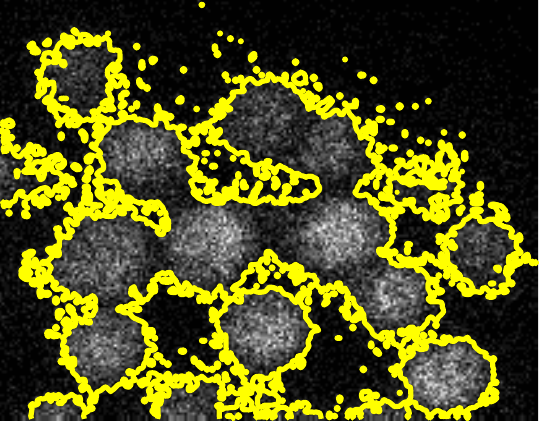}\\
\centering{(d) Zhang \cite{Kaihua2016level}}
\end{minipage}
\begin{minipage}[t]{4 cm}
\includegraphics[height=3 cm]{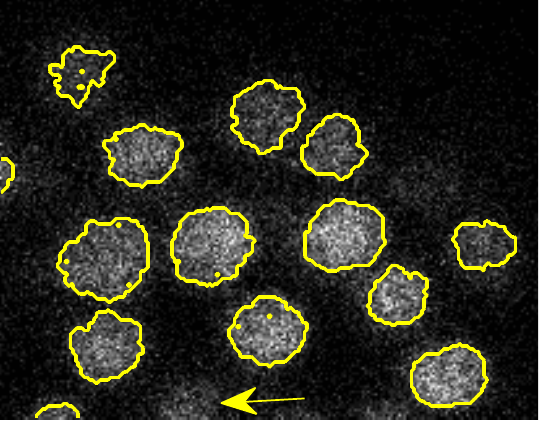}\\
\centering{(e) TV \eqref{mainmodelTV}}
\end{minipage}
\begin{minipage}[t]{4 cm}
\includegraphics[height=3 cm]{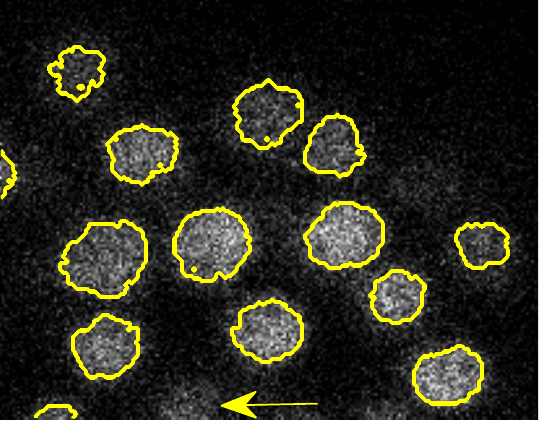}\\
\centering{(f) Tight-frame \eqref{mainmodel}}
\end{minipage}
\end{center}
\caption{\label{Cells}(a)--(d) results of \cite{CV}, \cite{YBT}, \cite{Li_MRI} and \cite{Kaihua2016level} respectively, (e) TV \eqref{mainmodelTV}, (f) Tight-frame \eqref{mainmodel}. }
\end{figure*}

\emph{Example \ref{Cells}}: Fig. \ref{Cells} is extracted from  a noisy real image from an automated cell tracking system \cite{AutomatedCell}, where the authors developed a system to track cell lineage during  Caenorhabditis elegans embryogenesis under low exposure of lights. In their experiments, noise in the images led to false positives in nuclear identification. This image is difficult to segment because of high level of noise and intensity inhomogeneity: it can be seen that some cells have high intensities while other cells have lower intensities. Our goal in this experiment is to segmented all the isolated cells. Fig.~\ref{Cells}(a) segments most cells, with one obvious cell excluded (see the arrow in (a)). Fig.~\ref{Cells}(b) from \cite{YBT}  produces a result similar to (a). Fig.~\ref{Cells}(c) from \cite{Li_MRI} produces unnatural boundaries of the cells, e.g. see the green arrow in (c), and two obvious cells are excluded, see the yellow arrows. Fig.~\ref{Cells}(d) from \cite{Kaihua2016level} fails to produce a good segmentation. Fig.~\ref{Cells}(e) and (f) from our methods produce good results, where the cells are well separated with smooth boundaries, but with one obvious cell excluded in the segmentation: see the arrows in (e) and (f).

\begin{figure*}[htbp]
\begin{center}
\begin{minipage}[t]{4.2 cm}
\includegraphics[height=4 cm]{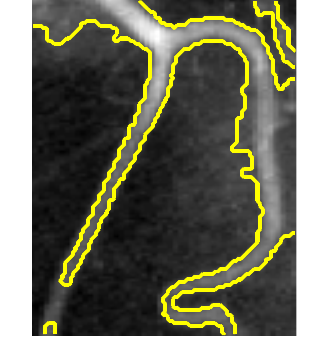}\\
\centering{(a) Chan--Vese \cite{CV}}
\end{minipage}
\begin{minipage}[t]{4.2 cm}
\includegraphics[height=4 cm]{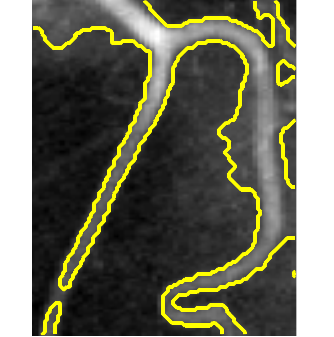}\\
\centering{(b) Yuan \cite{YBT}}
\end{minipage}
\begin{minipage}[t]{4.2 cm}
\includegraphics[height=4 cm]{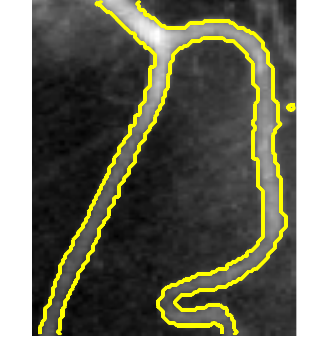}\\
\centering{(c) Li \cite{Li_MRI}}
\end{minipage}
\begin{minipage}[t]{4.2 cm}
\includegraphics[height=4 cm]{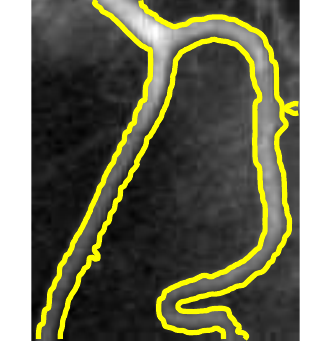}\\
\centering{(d) Zhang \cite{Kaihua2016level}}
\end{minipage}
\begin{minipage}[t]{4.2 cm}
\includegraphics[height=4 cm]{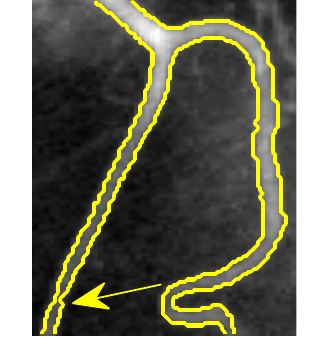}\\
\centering{(e) TV \eqref{mainmodelTV}}
\end{minipage}
\begin{minipage}[t]{4.2 cm}
\includegraphics[height=4 cm]{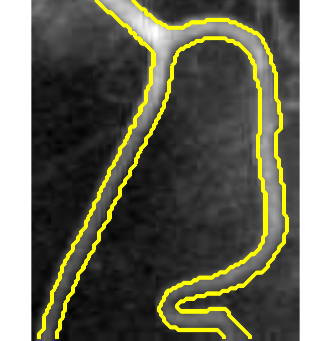}\\
\centering{(f) Tight-frame \eqref{mainmodel}}
\end{minipage}
\end{center}
\caption{\label{Weakvessels}(a)--(d) results of \cite{CV}, \cite{YBT}, \cite{Li_MRI} and \cite{Kaihua2016level} respectively, (e) TV \eqref{mainmodelTV}, (f) Tight-frame \eqref{mainmodel}. }
\end{figure*}

\emph{Example \ref{Weakvessels}}: Fig. \ref{Weakvessels} is an image of blood vessels. Notice that the top left corner of the image has a higher intensity than the left branch of the blood vessel, which makes the segmentation challenging. Fig.~\ref{Weakvessels}(a) from \cite{CV} fails to give a proper segmentation of the vessels: part of the upper left corner is included in the segmentation while the left branch of the vessel is disconnected. Fig.~\ref{Weakvessels}(b) from \cite{YBT} again produces a similar result as (a). Both Fig.~\ref{Weakvessels}(c) from \cite{Li_MRI} and (d) from \cite{Kaihua2016level} give satisfactory segmentations. Fig. \ref{Weakvessels}(e) from the TV regularisation manages to segment the vessels as a whole, but part of the left branch of the vessels is too narrow, see the arrow in (e). Fig. \ref{Weakvessels}(f) from the tight-frame regularisation produces a satisfactory segmentation.

\begin{figure*}[htbp]
\begin{center}
\begin{minipage}[t]{4.5 cm}
\includegraphics[height=2.81 cm]{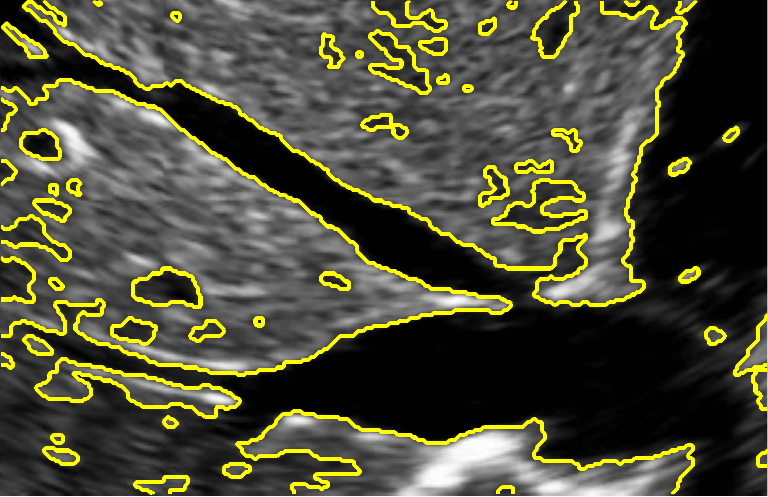}\\
\centering{(a)  Chan--Vese \cite{CV}}
\end{minipage}
\begin{minipage}[t]{4.5 cm}
\includegraphics[height=2.81 cm]{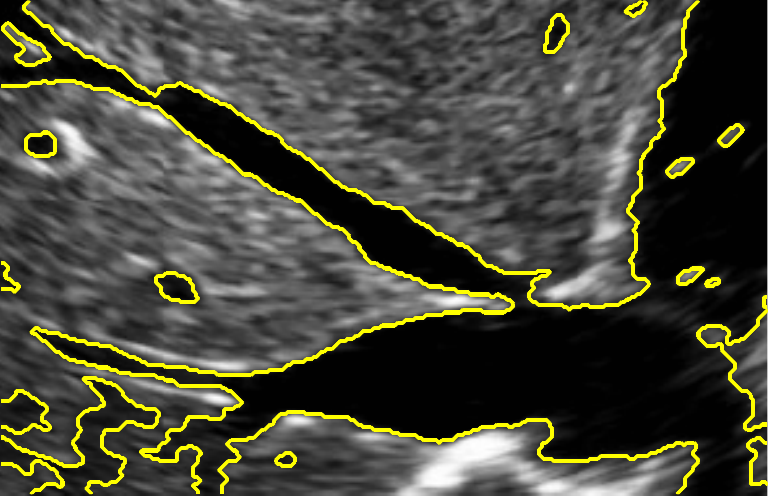}\\
\centering{(b) Yuan \cite{YBT}}
\end{minipage}
\begin{minipage}[t]{4.5 cm}
\includegraphics[height=2.81 cm]{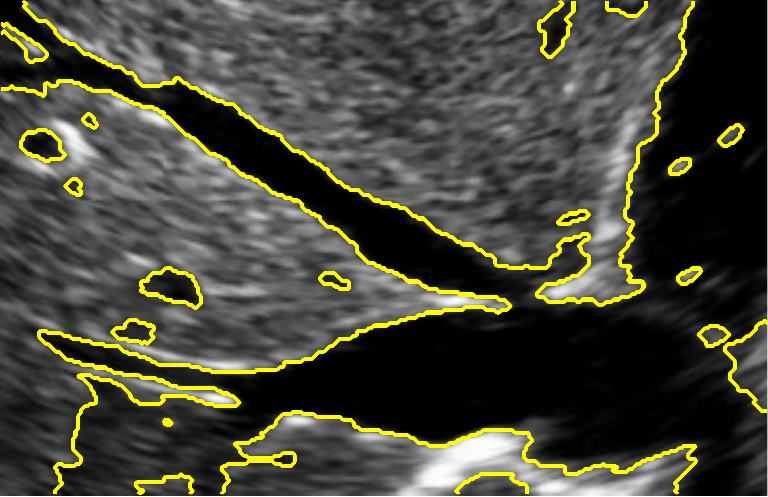}\\
\centering{(c) Li \cite{Li_MRI}}
\end{minipage}
\begin{minipage}[t]{4.5 cm}
\includegraphics[height=2.81 cm]{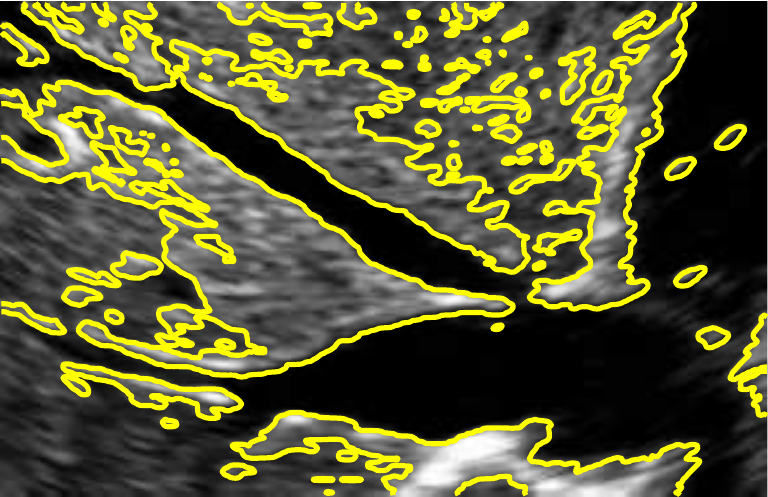}\\
\centering{(d) Zhang \cite{Kaihua2016level}}
\end{minipage}
\begin{minipage}[t]{4.5 cm}
\includegraphics[height=2.81 cm]{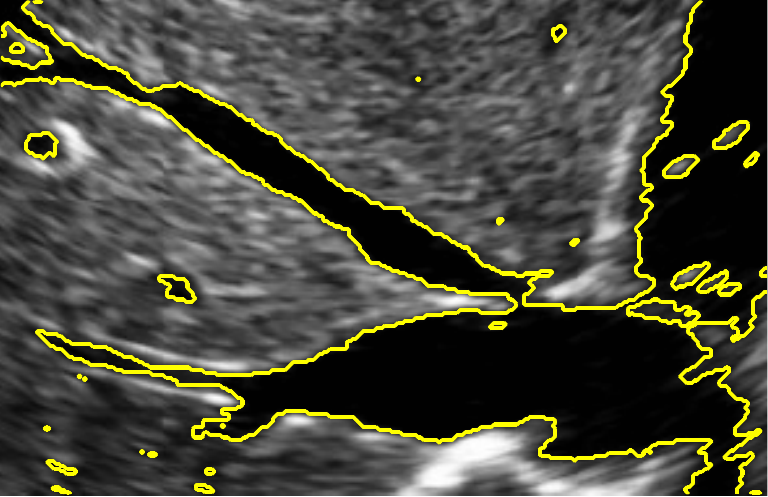}\\
\centering{(e) TV \eqref{mainmodelTV}}
\end{minipage}
\begin{minipage}[t]{4.5 cm}
\includegraphics[height=2.81 cm]{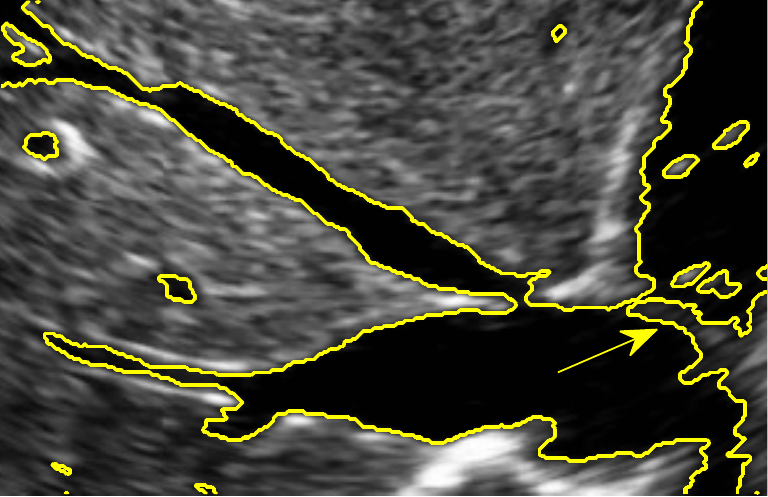}\\
\centering{(f) Tight-frame \eqref{mainmodel}}
\end{minipage}
\end{center}
\caption{\label{Liver}(a)--(d) results of \cite{CV}, \cite{YBT}, \cite{Li_MRI} and \cite{Kaihua2016level} respectively, (e) TV \eqref{mainmodelTV}, (f) Tight-frame \eqref{mainmodel}. }
\end{figure*}

\emph{Example \ref{Liver}}: Fig. \ref{Liver} is an image from an ultrasound data of a human liver. Fig. \ref{Liver}(a) from \cite{CV} fails to segment the liver as a whole, with many tiny holes left in the segmented region. Fig.~\ref{Liver}(b) from \cite{YBT} and (c) from \cite{Kaihua2016level} both get better overall segmentations than (a), but the lower left corners are not segmented well. Fig. \ref{Liver}(d) from \cite{Li_MRI} fails to produce a good segmentation. Fig. \ref{Liver}(e) and (f) from our methods both give satisfactory results, while the tight-frame regularisation gives more details of the lower right corner of the liver, please refer to the arrow in (f).

\begin{figure*}[htbp]
\begin{center}
\begin{minipage}[t]{4.5 cm}
\includegraphics[height=3 cm]{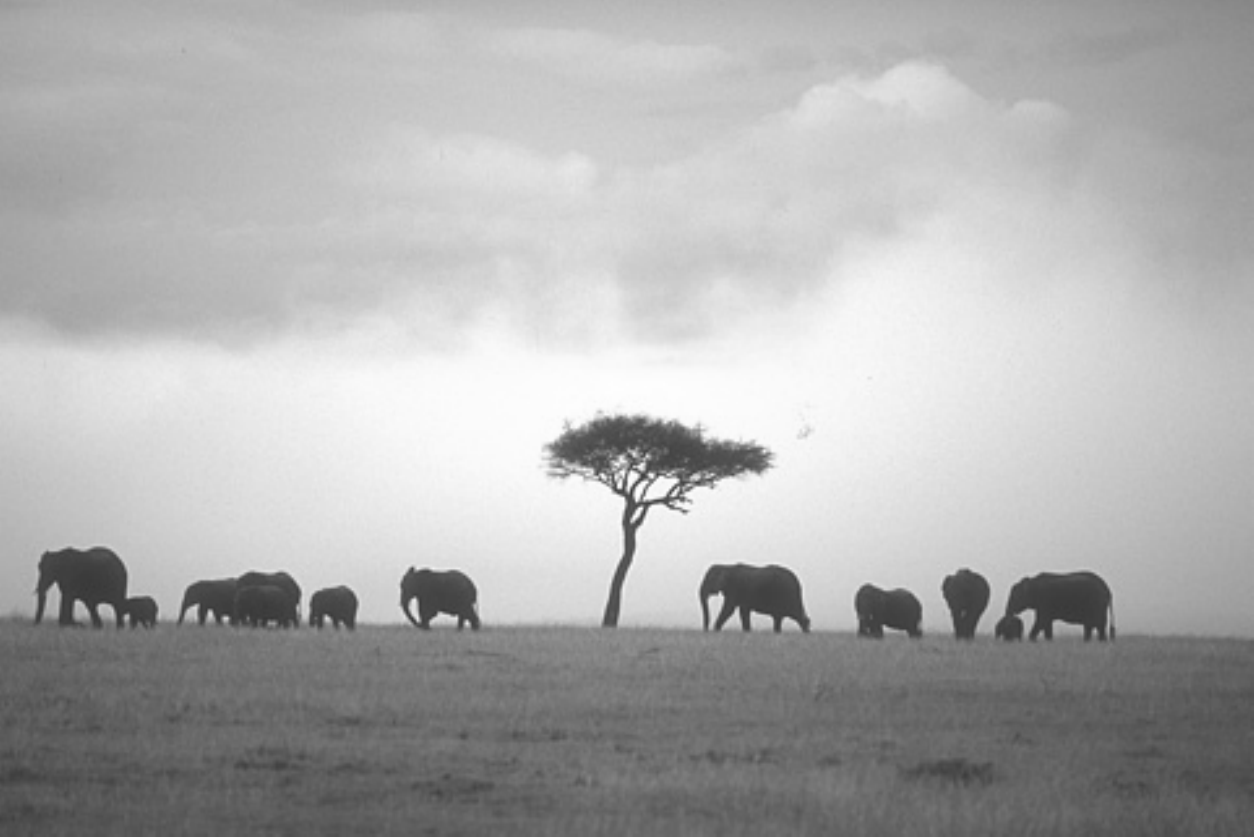}\\
\centering{(a) Original image}
\end{minipage}
\begin{minipage}[t]{4.5 cm}
\includegraphics[height=3 cm]{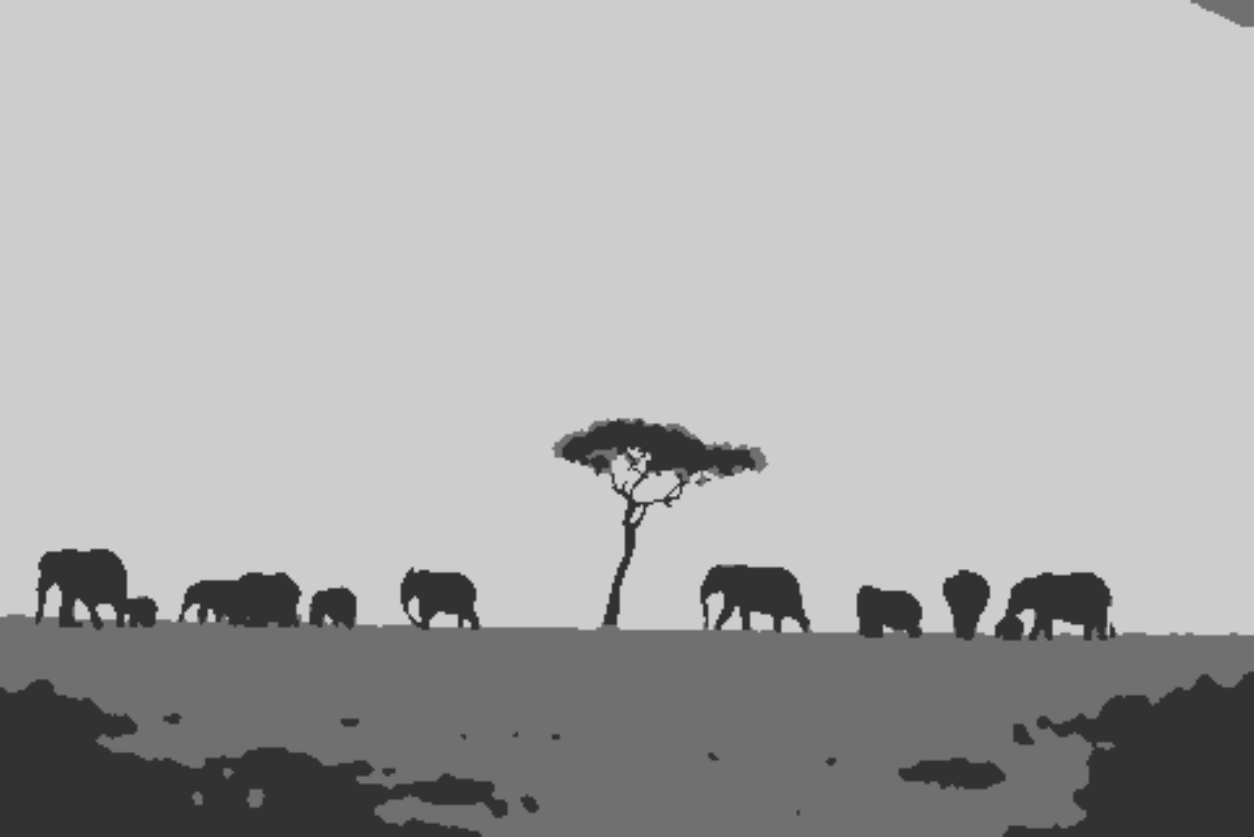}\\
\centering{(b) Yuan \cite{YBTBmul}}
\end{minipage}
\begin{minipage}[t]{4.5 cm}
\includegraphics[height=3 cm]{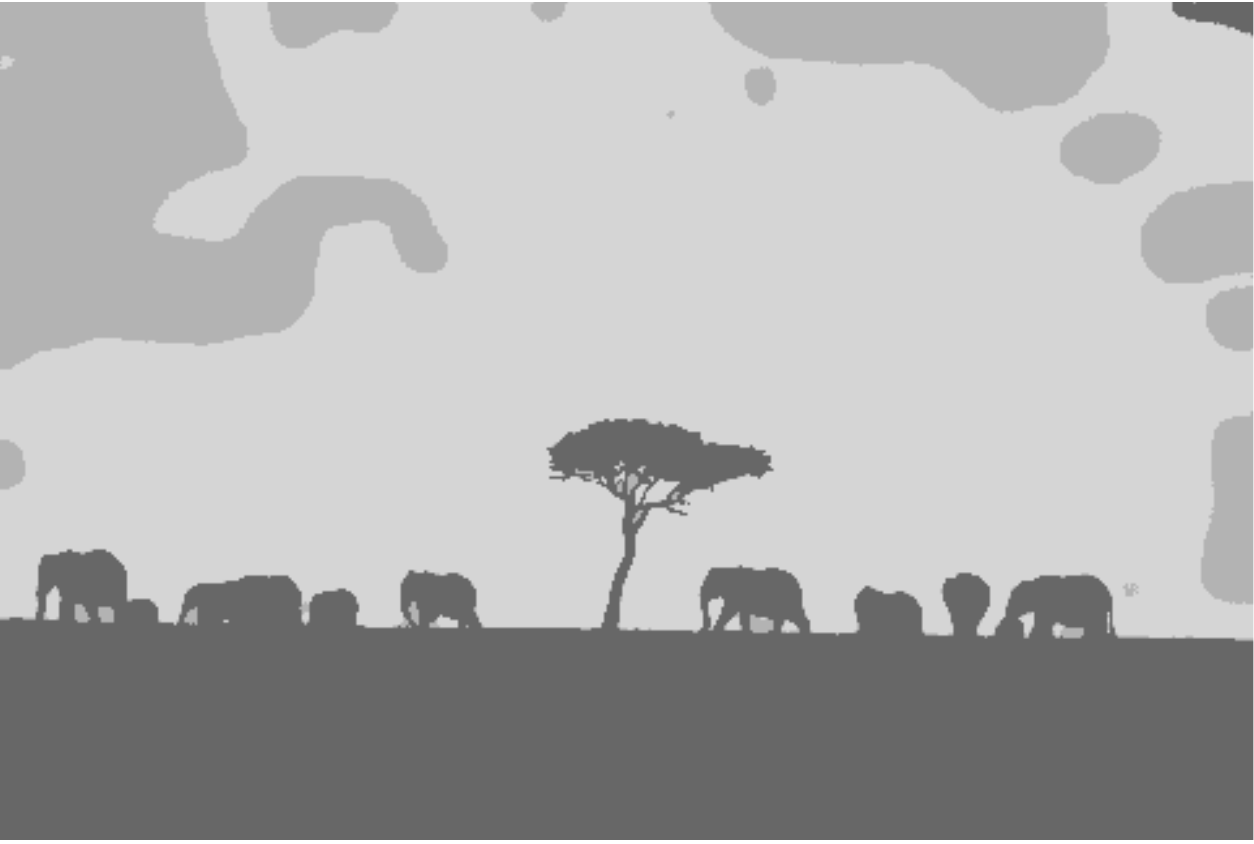}\\
\centering{(c) Li \cite{Li_MRI}}
\end{minipage}
\begin{minipage}[t]{4.5 cm}
\includegraphics[height=3 cm]{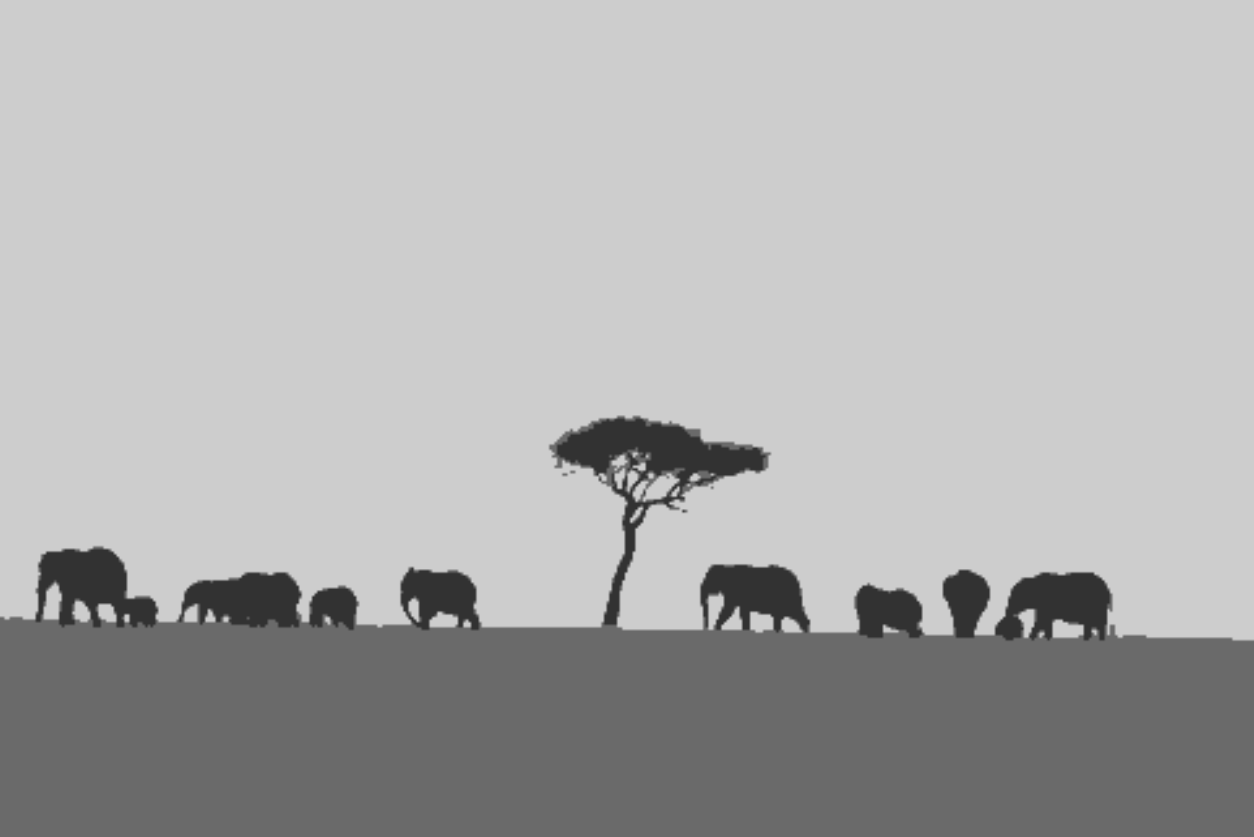}\\
\centering{(d) TV \eqref{mainmodelTV}}
\end{minipage}
\begin{minipage}[t]{4.5 cm}
\includegraphics[height=3 cm]{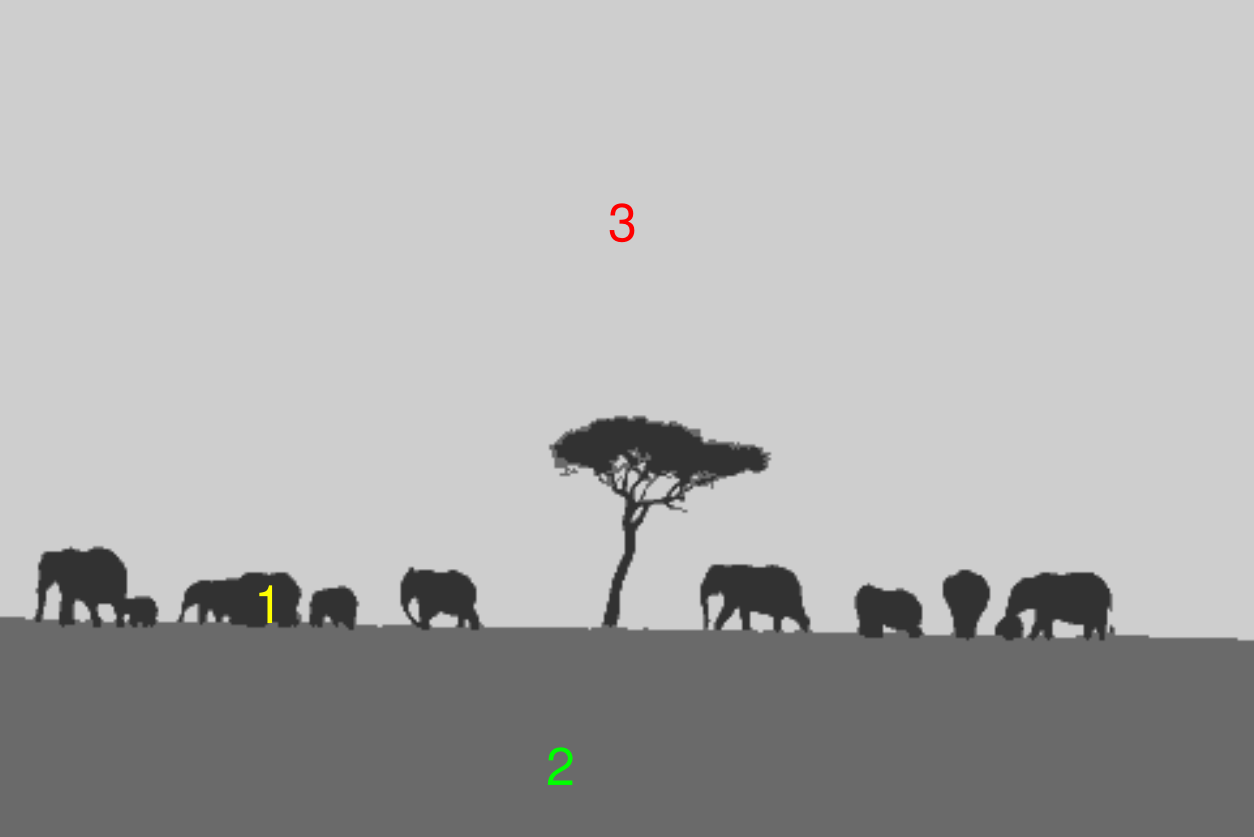}\\
\centering{(e) Tight-frame \eqref{mainmodel}}
\end{minipage}
\end{center}
\caption{\label{Animals}(a) Original image ``Animals'', (b) result of \cite{YBTBmul}, (c) result of \cite{Li_MRI}, (d)TV \eqref{mainmodelTV}, (e) Tight-frame \eqref{mainmodel}. }
\end{figure*}

\emph{Example \ref{Animals}}: This image is from the Berkeley Segmentation Dataset and Benchmark \cite{BerkleyDataset}. Fig. \ref{Animals}(a) is the original image. In this experiment, we try to segment Fig.~\ref{Animals}(a) into 3 phases: the ground, the tree and the elephants, and the sky. Fig.~\ref{Animals}(b) from \cite{YBTBmul} fails to segment the ground as a whole, and the upper right corner of the sky is segmented incorrectly. Fig.~\ref{Animals}(c) from \cite{Li_MRI} fails to distinguish the animals from the ground. Both Fig.~\ref{Animals}(d) and (e) from the TV \eqref{mainmodelTV} and tight frame \eqref{mainmodel} methods give good results, with the three phases correctly separated.

\begin{figure*}[htbp]
\begin{center}
\begin{minipage}[t]{2.6 cm}
\includegraphics[height=2.5 cm]{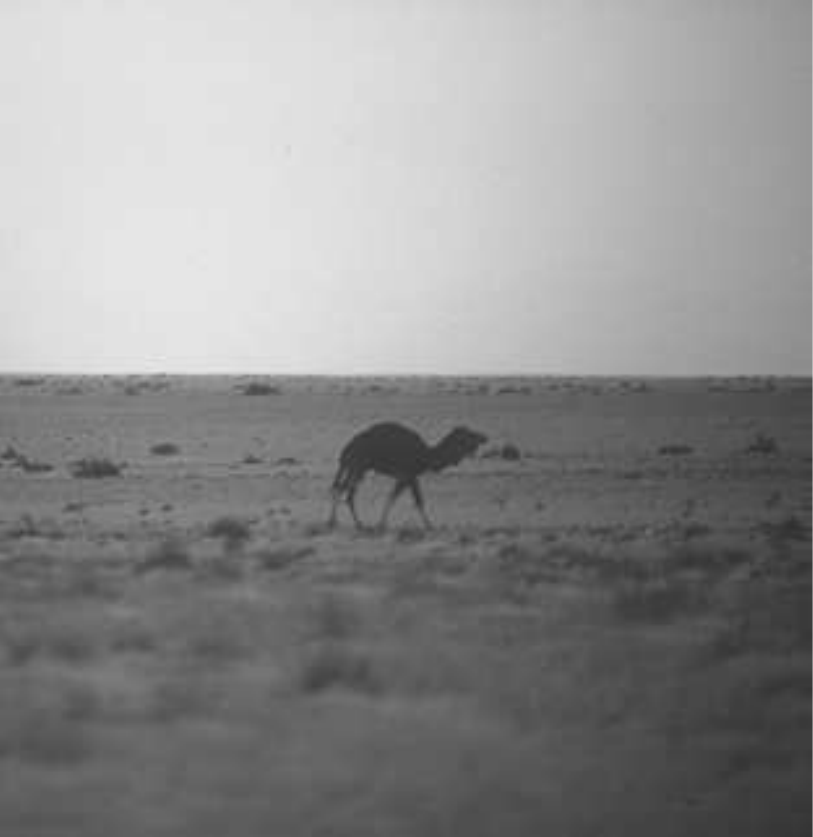}\\
\centering{(a) Original image}
\end{minipage}
\begin{minipage}[t]{2.6 cm}
\includegraphics[height=2.5 cm]{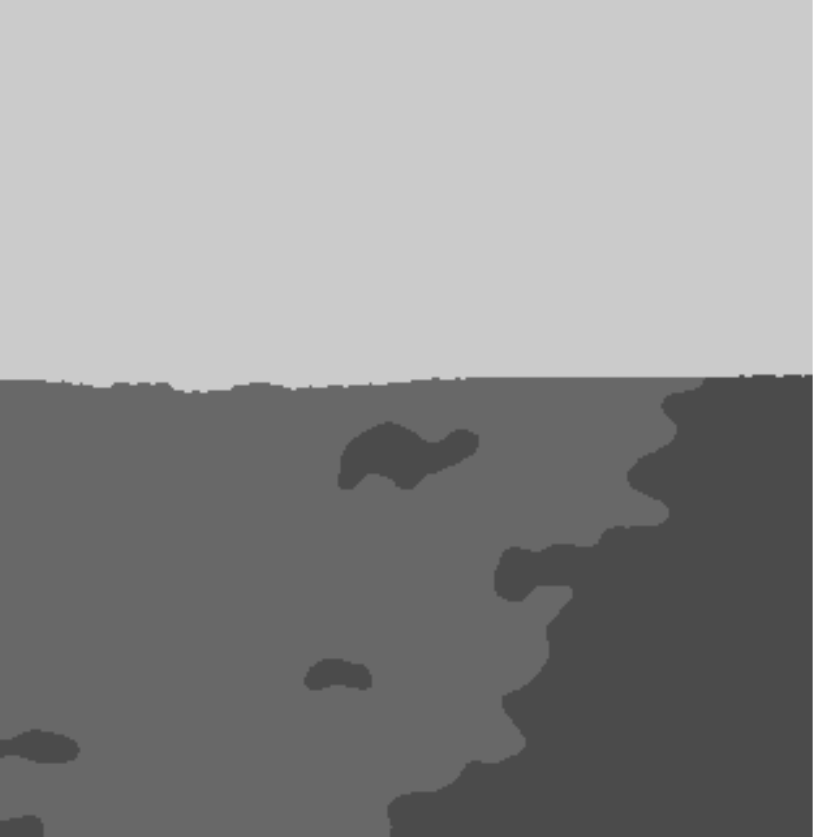}\\
\centering{(b) Yuan \cite{YBTBmul}}
\end{minipage}
\begin{minipage}[t]{2.6 cm}
\includegraphics[height=2.5 cm]{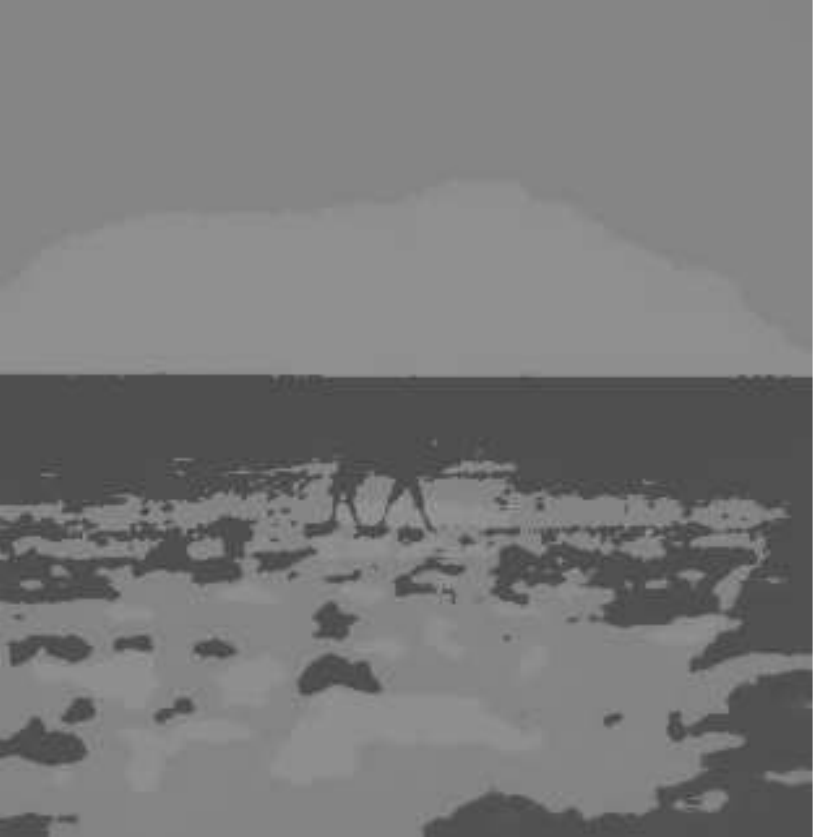}\\
\centering{(c) Li \cite{Li_MRI}}
\end{minipage}
\begin{minipage}[t]{2.6cm}
\includegraphics[height=2.5 cm]{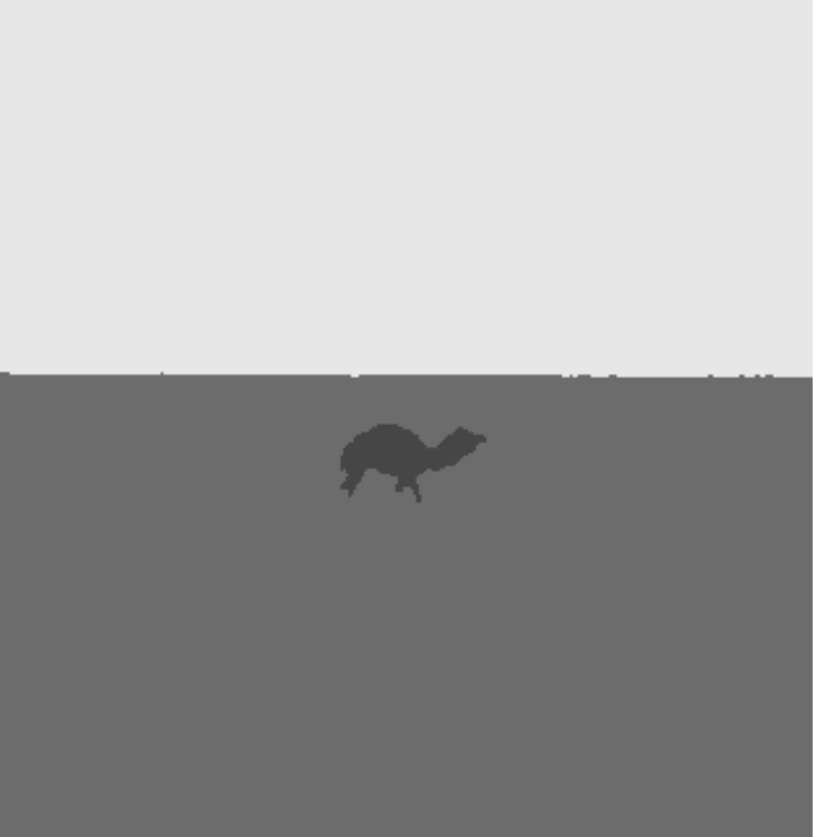}\\
\centering{(d) TV \eqref{mainmodelTV}}
\end{minipage}
\begin{minipage}[t]{2.6 cm}
\includegraphics[height=2.5 cm]{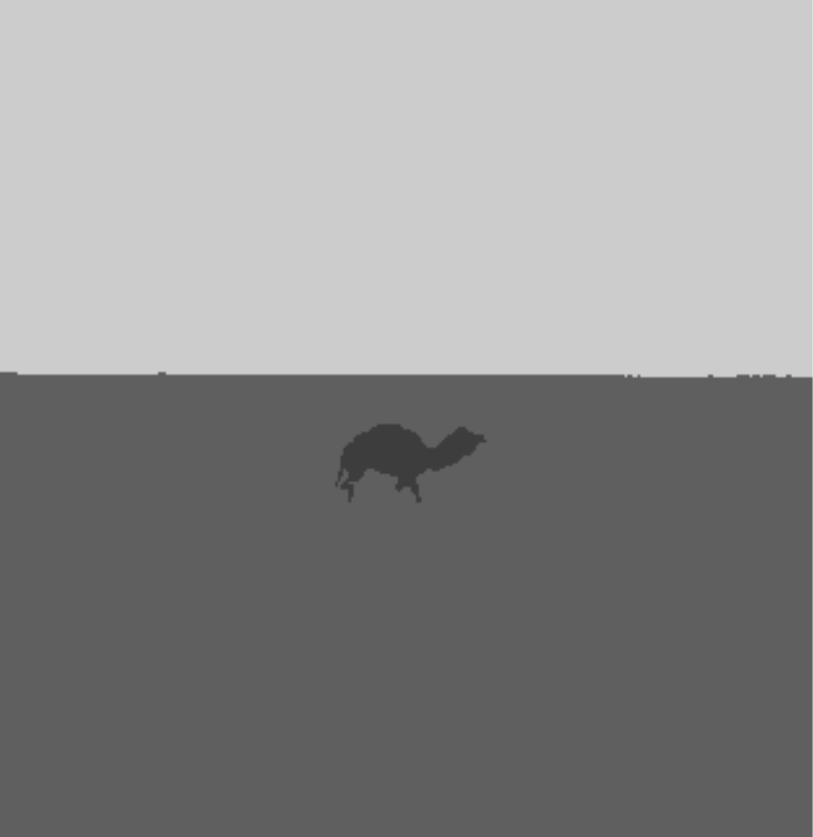}\\
\centering{(e) Tight-frame \eqref{mainmodel}}
\end{minipage}
\begin{minipage}[t]{4.0 cm}
\includegraphics[height=3.9 cm]{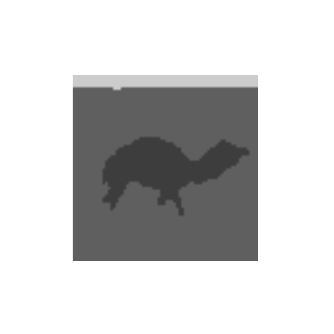}\\
\centering{(f) detail of (d)}
\end{minipage}
\begin{minipage}[t]{4.0 cm}
\includegraphics[height=3.9 cm]{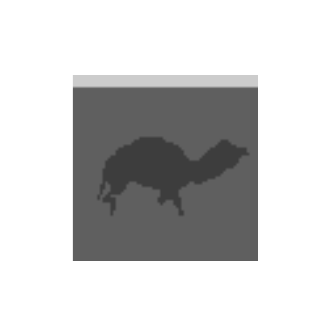}\\
\centering{(g) detail of (e)}
\end{minipage}
\end{center}
\caption{\label{Camel}(a) Original image ``Camel'', (b) result of \cite{YBTBmul}, (c) result of \cite{Li_MRI}, (d)TV \eqref{mainmodelTV}, (e) Tight-frame \eqref{mainmodel}, (f) detail of (d), (g) detail of (e). }
\end{figure*}

\emph{Example \ref{Camel}}: This image is also from the Berkeley Segmentation Dataset and Benchmark \cite{BerkleyDataset}. Fig.~\ref{Camel}(a) is the original (cropped) image. We aim to segment this image into three phases. It is clear that this image has severe intensity inhomogeneity: the right part of the image is darker, with the bottom right corner having similar intensity values as that of the camel in the middle of the image. Fig.~\ref{Camel}(b) from \cite{YBTBmul} manages to separate the camel, but a large part of the ground is wrongly segmented. Fig.~\ref{Camel}(c) fails to give a reasonable segmentation. Fig.~\ref{Camel}(d) and (e) from our TV \eqref{mainmodelTV} and tight frame \eqref{mainmodel} models both give good segmentations, with clear separations of the camel, the ground, and the sky. From the details in Fig.~\ref{Camel}(f) and (g), we see that tight frame regularisation can even preserve the tail of the camel while the TV regularisation fails to do so.

\begin{figure*}[htbp]
\begin{center}
\begin{minipage}[t]{4.3 cm}
\includegraphics[height=3 cm]{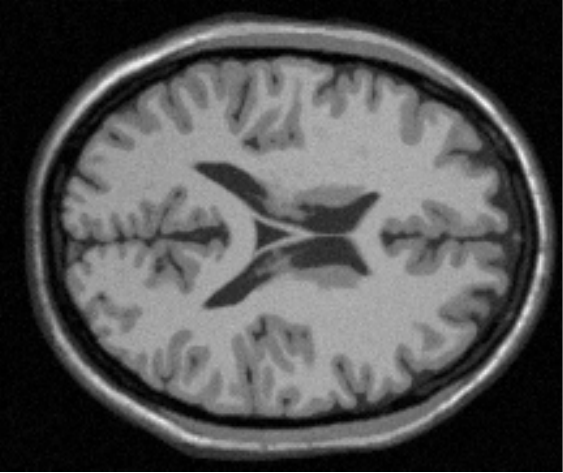}\\
\centering{(a) Original image}
\end{minipage}
\begin{minipage}[t]{4.3 cm}
\includegraphics[height=3 cm]{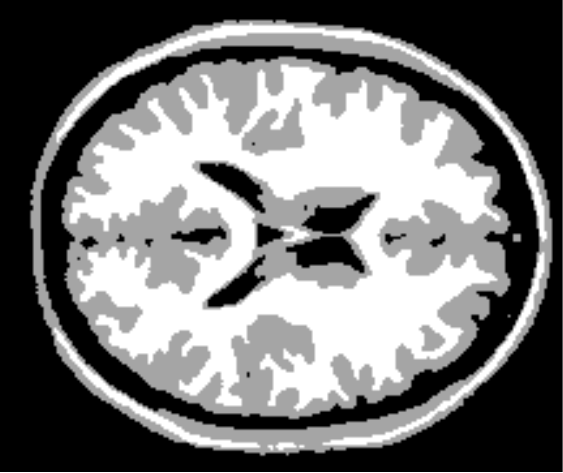}\\
\centering{(b) Yuan \cite{YBTBmul} }
\end{minipage}
\begin{minipage}[t]{4.3 cm}
\includegraphics[height=3 cm]{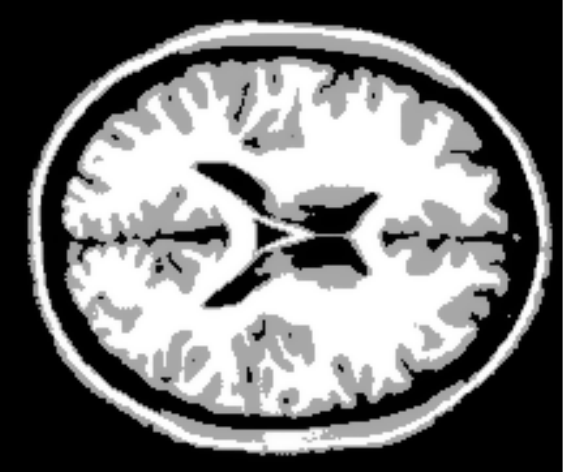}\\
\centering{(c) Li \cite{Li_MRI}}
\end{minipage}
\begin{minipage}[t]{4.3 cm}
\includegraphics[height=3 cm]{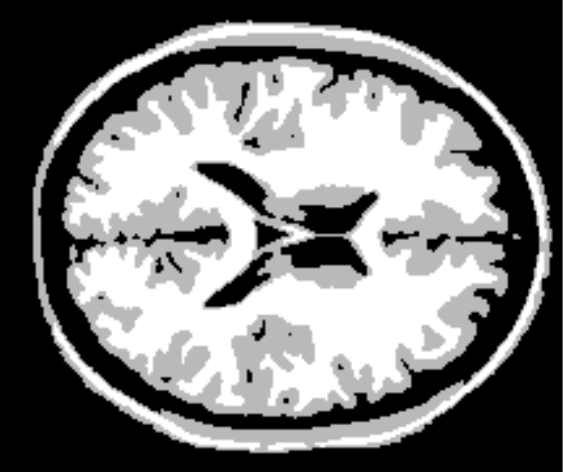}\\
\centering{(d) TV \eqref{mainmodelTV}}
\end{minipage}
\begin{minipage}[t]{4.3 cm}
\includegraphics[height=3 cm]{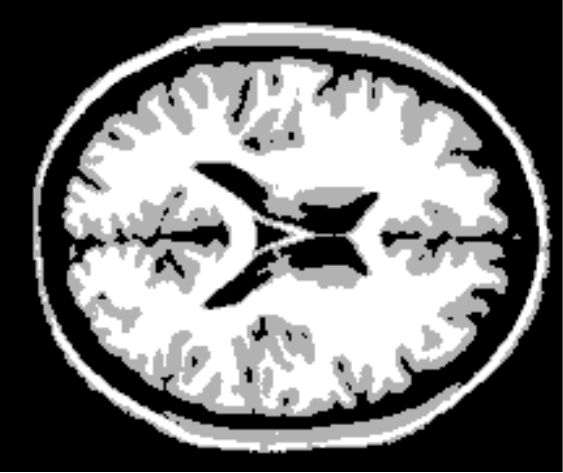}\\
\centering{(e) Tight-frame \eqref{mainmodel}}
\end{minipage} \\
\begin{minipage}[t]{2.5 cm}
\includegraphics[height=2.5 cm]{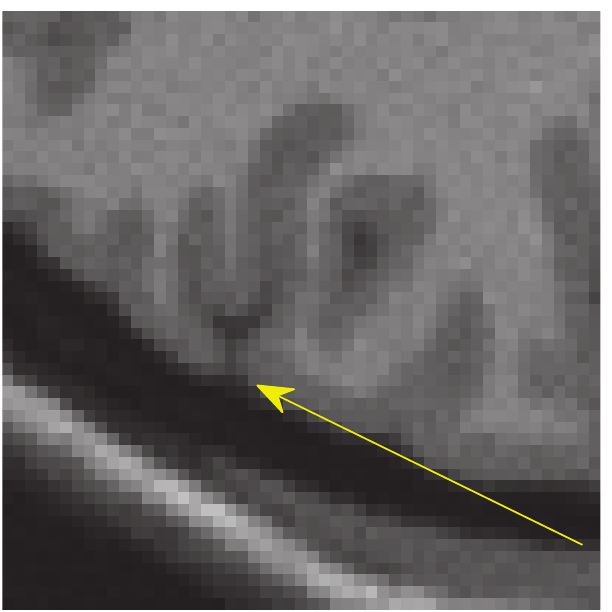}\\
\centering{(f) Detail of (a)}
\end{minipage}
\begin{minipage}[t]{2.5 cm}
\includegraphics[height=2.5 cm]{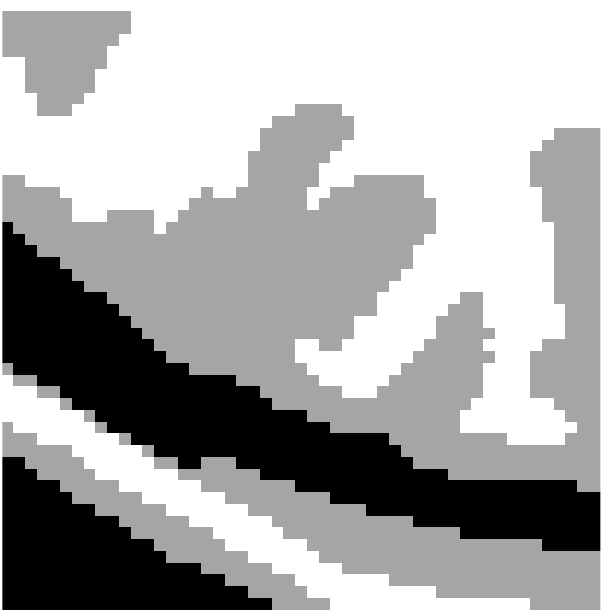}\\
\centering{(g) Detail of (b)}
\end{minipage}
\begin{minipage}[t]{2.5 cm}
\includegraphics[height=2.5 cm]{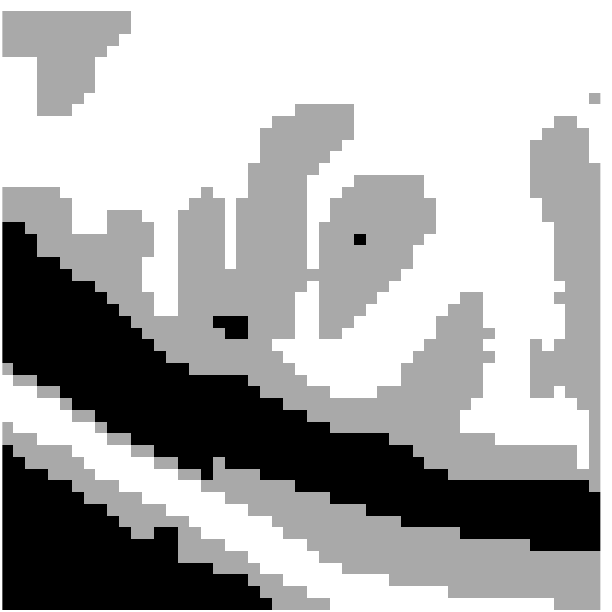}\\
\centering{(h) Detail of (c)}
\end{minipage}
\begin{minipage}[t]{2.5 cm}
\includegraphics[height=2.5 cm]{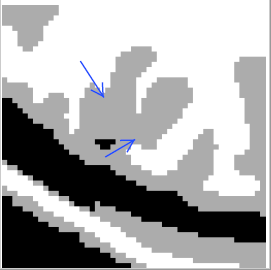}\\
\centering{(i) Detail of (d)}
\end{minipage}
\begin{minipage}[t]{2.5 cm}
\includegraphics[height=2.5 cm]{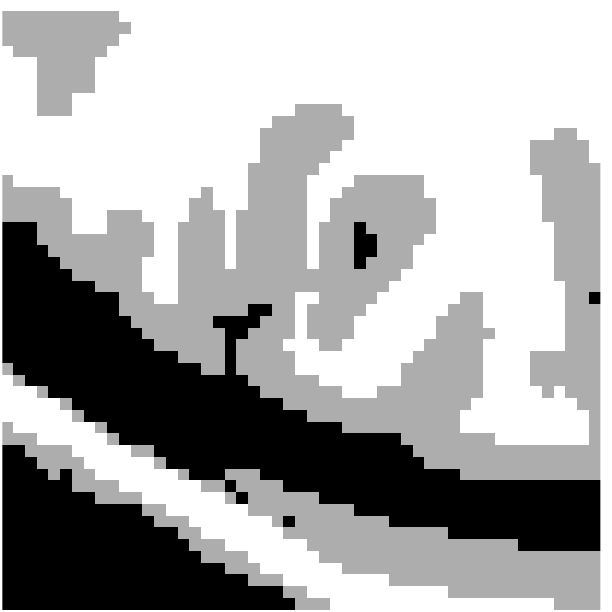}\\
\centering{(j) Detail of (e)}
\end{minipage}
\end{center}
\caption{\label{Brain}(a) Original image ``Brain'', (b) result of \cite{YBTBmul} (c) result of \cite{Li_MRI}, (d)TV \eqref{mainmodelTV}, (e) Tight-frame \eqref{mainmodel}, (f)--(j) details of (a)--(e). }
\end{figure*}

\emph{Example \ref{Brain}}: Lastly we show an example of multi-phase segmentation on a simulated Brain MRI image. This image is obtained from \url{http://www.bic.mni.mcgill.ca/brainweb/}, with T1 modality, 1mm slice thickness, 3\% noise and 20\% intensity non-uniformity. In this experiment, our goal is to separate the background, the gray matter of the brain, and the white matter of the brain. From the detailed images Fig.~\ref{Brain}(f)--(j), we see that our method with tight frame regularisation \eqref{mainmodel} gives the best result. Please note the crack indicated by an arrow in Fig.~\ref{Brain}(f). Only Fig.~\ref{Brain}(j) from tight frame regularisation manages to reserve the crack. We notice that compared with the TV regularisation, the tight frame regularisation can get more details in the white matter of the brain (see the two arrows in Fig.~\ref{Brain}(i) which indicate missing details).

\section{Conclusion and possible future improvements}

In this paper, we have proposed a method to segment images with intensity inhomogeneity. We use both TV and tight-frame regularisation in our method to explore their difference and connection. There are two-stages in the segmentation: in the first stage, we solve a convex minimization problem to decouple the original image into reflection and illumination, and in the second stage we segment the image by thresholding the reflection part of the image. Comparing with \cite{TVRetinex}, which is for image enhancement, our model has an extra smoothing term on the reflection part to efface  tiny structures in images. Moreover, we propose a unified primal-dual method to solve our model with both TV and tight-frame regularisation. This is better than the inexact approach appearing in \cite{TVRetinex} since the convergence of our numerical scheme is guaranteed. Furthermore, numerical experiments show that our approach can produce good segmentations for various images with intensity inhomogeneity, and the introduction of framelet regularisation improves fine details of the segmentations.

Our approach have several advantages. First, our models \eqref{mainmodelTV} and \eqref{mainmodel} in the first stage are convex, which guarantees the uniqueness of the solutions and the stability of our algorithm.
The proposed numerical scheme has only one loop and each step is exact. Secondly, our formulation of intensity inhomogeneity can segment both natural and medical images well. Thirdly, in the second stage, the choice of a threshold and the number of phases are independent of the first stage. Therefore, our method is capable of segmentations with any number of phases, and users can try different number of phases or thresholds without recalculating the first stage.

Our method can be further improved in several ways. One is to construct more efficient algorithms to solve model \eqref{mainmodel}. Here in order to get satisfactory results, we set a large iteration number in the implementation of the primal-dual algorithm to solve model \eqref{mainmodel}. Another possible improvement is to consider automatic clustering algorithms, like the K-means method \cite{HW} or the DBSCAN method \cite{Ester_1996}, to determine the thresholds in the second stage.

\end{document}